\documentclass[11pt,a4paper]{article}
\usepackage{amsmath,amsxtra,amscd,amssymb,latexsym,stmaryrd,amsthm}
\usepackage{graphicx,color}
\usepackage{hyperref}
\usepackage{appendix}
\usepackage[utf8]{inputenc}
\usepackage[T1]{fontenc}
\usepackage{mathrsfs}
\usepackage[all]{xy}
\usepackage{tikz,xcolor}
\usepackage{authblk}

\definecolor{Red}{cmyk}{0,1,1,0}

\definecolor{Blue}{cmyk}{1,1,0,0}

\theoremstyle{plain}
\newtheorem{theorem}{Theorem}[section]
\newtheorem{corollary}[theorem]{Corollary}
\newtheorem{proposition}[theorem]{Proposition}
\newtheorem{lemma}[theorem]{Lemma}

\newcommand{\ba}{{\bf a}}
\theoremstyle{definition}

\title{Pressure Derivative on Uncountable Alphabet Setting: a Ruelle Operator Approach}

\author[$\S$]{E. A. Silva}
\setcounter{Maxaffil}{0}

\affil[$\S$]{ Departamento de Matem\'atica - UnB - 70910-900, Bras\'ilia, Brazil}
\date{\small\today}

\begin{document}
\makeatletter
    \def\blfootnote{\gdef\@thefnmark{}\@footnotetext}
    \let\@fnsymbol\@roman
    \makeatother

\maketitle


\begin{abstract}
In this paper we use a recent version of the Ruelle-Perron-Frobenius 
Theorem to compute, in terms of the maximal eigendata of the Ruelle
 operator, 
the pressure derivative of translation invariant spin systems
taking values on a general compact metric space. 
On this setting the absence of metastable states for continuous potentials
on one-dimensional one-sided lattice is proved.  
We apply our results, to show that the pressure 
of an essentially one-dimensional 
Heisenberg-type model, on the lattice $\mathbb{N}\times \mathbb{Z}$, 
is Fr\'echet differentiable, on a suitable Banach space. Additionally, 
exponential decay of the two-point function, for this model, 
is obtained for any positive temperature.
\end{abstract}

\blfootnote{\textup{2010} \textit{Mathematics Subject Classification}: 37D35.}
\blfootnote{\textit{Keywords}:  Thermodynamic Formalism, pressure derivative, transfer operator.}

\section{Introduction}
The Ruelle operator  was introduced by David Ruelle in the seminal 
paper  \cite{Ruelle-1968}, in order to  prove 
the  existence and uniqueness of the Gibbs measures 
for some long-range Statistical Mechanics models in the 
one-dimensional lattice. Ever since  
the Ruelle operator  has become a standard tool 
in a variety  of mathematical fields, for instance, in Dynamical Systems, 
and other branches of Mathematics and Mathematical Physics.
The Ruelle operator was generalized in several 
directions and its generalizations are commonly 
called transfer operators.  Transfer operators 
appear in IFS theory, Harmonic Analysis and 
$C^{*}$-algebras, see for instance
 \cite{Excel-Lopes,Lau,Straub} respectively.

In Dynamical Systems the existence 
of Markov Partitions allows one to conjugate
uniform hyperbolic maps on compact  differentiable manifolds
with the shift map in the   Bernoulli space.
For more details see, for example, \cite{bo} and references therein.

A field in which the Ruelle operator formalism  has also been proved useful  
is  the Multifractal Analysis. Bowen, in the seminal work
\cite{boo}, has established a relationship between the  Hausdorff dimension 
of certain fractal sets and  topological pressure, 
for more
details see \cite{boo,ma} 
and also the introductory texts \cite{Barreira,Pesin}.

The classical Thermodynamic Formalism was  originally developed 
in the Bernoulli space $M^{\mathbb{N}}$, with $M$ being a finite alphabet, see \cite{PP}. 
The motivation to consider more general alphabets from the 
dynamical system point of view is given
in \cite{Sarig 1,Sarig 2}, where proposed models with 
infinite alphabet $M=\mathbb{N}$ are used to 
describe some  non-uniformly hyperbolic maps, 
for instance, the Manneville-Pomeau maps.
Unbounded alphabets as general standard Borel spaces, which includes 
compact and non-compact, are considered in details in \cite{Geogii88}.

In \cite{le} the authors considered the alphabet $M=S^1$ and a Ruelle operator formalism is developed. 
Subsequently, in \cite{LMMS}, this formalism was extended to general compact metric alphabets.
Those alphabets do not fit in the classical theory, since the number of preimages 
under the shift map may be not countable. To circumvent this 
problem the authors considered an a priori measure $\mu$
defined on $M$, and so a generalized Ruelle operator can be defined  
and a version of the Ruelle-Perron-Frobenius Theorem is proved. 
In this general setting concepts of entropy and pressure are also introduced. 
A variational principle is obtained in \cite{LMMS}. The authors also show that their 
theory can also be used to recovery some results of Thermodynamic Formalism for countable 
alphabets, by taking the one-point compactification of $\mathbb{N}$, 
and choosing a suitable a priori measure $\mu$.

In Classical Statistical Mechanics uncountable alphabets 
shows up, for example, in the so-called $O(n)$ models
with $n\geq 2$. These  are models 
on a $d$-dimensional lattice for which a vector 
in the $(n-1)$-dimensional  sphere is assigned   
to every lattice site  and the vectors  at 
adjacent sites interact  ferromagnetically 
via their inner product.
More specifically, let $n\geq 1$ be an integer  
and let $G=(V(G), E(G))$ a finite graph.  A configuration 
of the {\it Spin $O(n)$ model} on $G$ is an 
assignment $\sigma:V(G)\to S^{n-1}$, 
we denote by $\Omega:=(S^{n-1})^{V(G)}$  the 
space of configurations. At 
inverse temperature $\beta\in (0, \infty),$ 
configurations are randomly chosen from 
the probability measure $\mu_{G,n, \beta}$ 
given by 
\[
d\mu_{G,n, \beta}(\sigma):=\dfrac{1}{Z_{G, n, \beta}}
\exp ( \beta \sum_{u,v \in E(G)} \sigma_u\cdot \sigma_v) \ d\sigma,
\]
where $\sigma_u\cdot \sigma_v$ denotes the inner product 
in $\mathbb{R}^n$ of the assignments $\sigma_u$ and $\sigma_v$, 
$Z_{G, n, \beta}=\int_{\Omega}
\exp( 
\beta \sum_{u,v \in E(G)} \sigma_u\cdot \sigma_v)\, d\sigma
$
and $d\sigma$ is the uniform probability measure 
on $\Omega$.

Special cases of these models have names of their own:
 when $n=0$ this model is 
   the Self Avoiding Walking (SAW); when $n=1$, 
this model is  the Ising model,
when $n=2$,  the model is 
called the $XY$ model; finally  
for $n=3$ this model is called  the Heisenberg model,
see \cite{le, Ellis,GJ-Book,Geogii88, ONModels} for details.

In \cite{CL-SIL} the authors generalize the previous 
Ruelle-Perron-Frobenius Theorem for a more general class of potentials, 
satisfying what the authors called weak and  strong Walters conditions, 
which in turn are natural generalizations for the classical 
Walters condition.
The regularity properties of the pressure functional are studied 
and its analyticity is proved on the space of the Hölder 
continuous potentials. 
An exponential decay rate for correlations are obtained, in the case where 
the Ruelle operator exhibits the spectral gap property.
An example, derived from the long-range Ising model,
of a potential in the Walters class where 
the associated Ruelle operator has no
spectral gap is  given.

One of the main results of this work provide an 
explicit expression for the derivative 
of the pressure $P:C^{\alpha}(\Omega)\to \mathbb{R}$, 
where $\Omega \equiv M^{\mathbb{N}}$ and $M$ is a general compact metric space.
To be more precise, we show that  
\[\tag{1}\label{pressure-derivative-0}
P^{'}(f)\varphi=\int_{\Omega}\varphi h_f\, d\nu_f,
\]
where $h_f$ and $\nu_f$ are eigenfunction 
and  eigenmeasure of the associated Ruelle operator.

The proof follows closely the one given 
in \cite{ma}, where the expression \eqref{pressure-derivative-0} is obtained 
in the context of finite alphabet.
We also prove the existence of the limit
\[
P(f)=\lim_{n\to \infty}\dfrac{1}{n}\log \mathscr{L}_{f}^n{\bf 1}(x)
\]
in the uniform sense, for any continuous potential $f$. We would
like  to point out  that the existence of this limit in this setting has been  proved in 
\cite{CL16}. 
We give a new proof of this fact here for two reasons: 
first, it is different from the proof presented in \cite{CL16} 
and we believe it is more flexible to be adapted to other contexts;
second, some pieces of it are used to compute the pressure 
derivative. 

In the last section we apply our results in cases where $M=(S^2)^{\mathbb{Z}}$ 
endowed with a suitable a priori DLR-Gibbs measure. 
We introduce a Heisenberg-type model on the lattice $\mathbb{N}\times \mathbb{Z}$, 
depending on a real parameter $\alpha$, and 
use the Ruelle operator to obtain differentiability of the pressure and 
exponential decay rate for the two-point function for any positive temperature.

\section{Basic Definitions and Results}

In this section we set up the notation and present 
some preliminaries results. 
Let $M=(M,d)$ be a compact metric space, equipped 
with a Borel probability measure $\mu:\mathscr{B}(M)\to [0,1]$, 
having the whole space $M$ as its support. 
In this paper, the set of positive integers is denoted by $\mathbb{N}$.  
We shall denote by $\Omega=M^{\mathbb{N}}$ 
the set of all sequences $x=(x_1,x_2,\ldots )$, where 
$x_i \in M$, for all $i\in {\mathbb N}$. 
We denote the left shift mapping by $\sigma:\Omega\to\Omega,$ 
 which is given 
by $\sigma(x_1,x_2,\ldots)=(x_2,x_3,\ldots)$.
We consider the metric $d_{\Omega}$ on $\Omega$ given by
\[
d_{\Omega} (x,y)
=
\sum_{n=1}^{\infty} \frac{1}{2^n}d(x_n,y_n).
\]
The metric $d_{\Omega}$ induces the product topology
and therefore it follows from  Tychonoff's theorem that 
$(\Omega,d_{\Omega})$ is a compact metric space.

The space of all  continuous real functions $C(\Omega, \mathbb R)$
is denoted simply by $C(\Omega)$ and will be endowed with the norm
$\|\cdot\|_0$ defined  by
$\|f\|_0=\sup_{x\in \Omega} |f(x)|,$ which in turn is  a Banach space. 
For any fixed $0< \alpha< 1$ we denote by $C^{\alpha}(\Omega)$ 
the space of all $\alpha$-H\"older continuous functions, 
that is, the set of all functions $f:\Omega\to\mathbb{R}$ satisfying 
\[
\mathrm{Hol}_{\alpha}(f)
=
\sup_{x,y\in\Omega: x\neq y}
\dfrac{|f(x)-f(y)|}{d_{\Omega}(x,y)^{\alpha}}
<+\infty.
\] 

We equip $C^{\alpha}(\Omega),~0< \alpha< 1,$ with the norm 
 given by  
$\|f\|_{\alpha}= \|f\|_0+ \mathrm{Hol}(f)$.
We recall that 
$(C^{\alpha}(\Omega),\|\cdot\|_{\alpha}) $ is a Banach space for
any $0< \alpha< 1$.

Our \emph{potentials} will be elements of $C(\Omega)$
and, in order to have a well defined Ruelle operator 
when $(M,d)$ is a general compact metric space, we need 
to consider an {\it a priori measure} which is simply a 
Borel probability measure $\mu:\mathscr{B}(M)\to \mathbb{R}$,
where $\mathscr{B}(M)$ denotes the Borel $\sigma$-algebra 
of $M$. For many of the most popular choices of an uncountable
space $M$, there is a natural a priori measure $\mu$. 
Throughout this paper the a priori measure $\mu$ 
is supposed to have the whole space $M$ as its support. 
The Ruelle operator 
$\mathscr{L}_f: C(\Omega) \to C(\Omega)$
is the mapping sending $\varphi$ to $\mathscr{L}_{f}(\varphi)$ 
defined for any $x\in\Omega$ by the expression
\[
\mathscr{L}_f(\varphi)(x)
=
\int_M e^{f(ax)}\varphi(ax)d\mu(a),
\]
where $ax$ denotes the sequence 
$ax=(a, x_1, x_2, \ldots)\in \Omega$. 
%
The classical Ruelle operator can be recovered on this  
setting by considering $M=\{0,1,\ldots,n\}$ and 
the  a priori measure  $\mu$ as the normalized counting measure.

\begin{theorem}[Ruelle-Perron-Frobenius]
\label{teo-RPF-compacto}
Let $(M,d)$ be a compact metric space, $\mu$ a 
Borel probability measure on $M$ having full support and 
$f$ a potential in $C^{\alpha}(\Omega)$, 
where $0<\alpha<1$. Then
$\mathscr{L}_f: C^{\alpha}(\Omega) \to C^{\alpha}(\Omega)$
has a simple positive eigenvalue of maximal modulus $\lambda_f$, 
and there are a strictly positive function $h_f$ 
and a Borel probability measure $\nu_{f}$ on $\Omega$ such that,

\begin{itemize}
	 
\item[(i)]
$\mathscr{L}_f h_f=\lambda_f h_f,$ $\mathscr{L}^{*}_f\nu_f=\lambda_f\nu_f$; 

\item[(ii)]  
the remainder of the spectrum of 
$\mathscr{L}_f: C^{\alpha}(\Omega) \to C^{\alpha}(\Omega)$ is con\-tai\-ned in 
a disc of radius strictly smaller than $\lambda_f$;

\item[(iii)]
for all continuous functions $\varphi\in C(\Omega)$ we have
\[
\lim_{n\to\infty}
\left\|
\lambda_{f}^{-n}\mathscr{L}^{n}_{f}\varphi-h_f\int \varphi\,  d\nu_{f}
\right\|_0
=
0.
\]
\end{itemize}
\end{theorem}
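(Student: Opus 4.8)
The plan is to produce the maximal eigendata in three stages and then derive the spectral gap and the asymptotics (ii)--(iii) from a projective contraction argument.

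First I would construct the eigenmeasure $\nu_f$ by a fixed-point argument on the dual side. The operator $\mathscr{L}_f$ is a bounded positive operator on $C(\Omega)$, so its adjoint $\mathscr{L}^{*}_f$ acts on finite Borel measures. On the weak-$*$ compact convex set $\mathcal{P}(\Omega)$ of Borel probability measures consider the map
\[
\nu \longmapsto \frac{\mathscr{L}^{*}_f\nu}{\mathscr{L}^{*}_f\nu(\mathbf{1})}.
\]
This map is well defined since $\mathscr{L}_f\mathbf{1}(x)=\int_M e^{f(ax)}\,d\mu(a)\ge e^{-\|f\|_0}>0$, so the denominator is continuous and bounded away from zero; the numerator is weak-$*$ continuous because $\int g\,d(\mathscr{L}^{*}_f\nu)=\int \mathscr{L}_f g\,d\nu$ with $\mathscr{L}_f g\in C(\Omega)$. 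The Schauder--Tychonoff theorem then yields a fixed point $\nu_f$, and setting $\lambda_f:=\int \mathscr{L}_f\mathbf{1}\,d\nu_f>0$ gives $\mathscr{L}^{*}_f\nu_f=\lambda_f\nu_f$, the second identity in (i).

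Next I would build the eigenfunction $h_f$. The central ingredient is a bounded-distortion estimate: using that $f\in C^{\alpha}(\Omega)$ and that cylinders of depth $n$ are contracted by the shift, one controls $\mathrm{Hol}_{\alpha}$ of the normalized iterates $\lambda_f^{-n}\mathscr{L}_f^{n}\mathbf{1}$ uniformly in $n$, with the integration over the alphabet handled against the a priori measure $\mu$. Combined with a uniform sup-bound, this makes the Cesàro averages
\[
h_n:=\frac1n\sum_{k=0}^{n-1}\lambda_f^{-k}\mathscr{L}_f^{k}\mathbf{1}
\]
a bounded equicontinuous family, so Arzelà--Ascoli extracts a subsequential limit $h_f$ satisfying $\mathscr{L}_f h_f=\lambda_f h_f$. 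Strict positivity of $h_f$ follows from the integral form of $\mathscr{L}_f$ together with the full support of $\mu$, and I normalize so that $\int h_f\,d\nu_f=1$.

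Finally, for (ii) and (iii) I would pass to the normalized operator $\widetilde{\mathscr{L}}\varphi:=(\lambda_f h_f)^{-1}\mathscr{L}_f(h_f\varphi)$, which is Markov ($\widetilde{\mathscr{L}}\mathbf{1}=\mathbf{1}$) and fixes the probability measure $dm_f:=h_f\,d\nu_f$. The cleanest route is a Birkhoff cone contraction: one shows that $\widetilde{\mathscr{L}}$ maps a suitable cone of Hölder functions strictly inside itself and contracts the associated Hilbert projective metric after finitely many iterates. This simultaneously yields simplicity of $\lambda_f$, the spectral gap (ii), and exponential convergence of $\widetilde{\mathscr{L}}^{n}\varphi$ to $\int\varphi\,dm_f$ for Hölder $\varphi$; translating back gives $\lambda_f^{-n}\mathscr{L}_f^{n}\varphi\to h_f\int\varphi\,d\nu_f$. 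Statement (iii) for arbitrary continuous $\varphi$ then follows by density of $C^{\alpha}(\Omega)$ in $C(\Omega)$ together with the uniform operator bound $\|\lambda_f^{-n}\mathscr{L}_f^{n}\|_0\le C$. The main obstacle is precisely the distortion estimate of the second step: because the shift has uncountably many preimages, the finite-alphabet bookkeeping must be replaced by uniform integral bounds against $\mu$, and one must check that the Hölder regularity of $f$ produces a summable control of the variation that survives both the integration over $M$ and the iteration.
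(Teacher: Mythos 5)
The paper offers no proof of this theorem: the ``proof'' in the text is a citation to \cite{le} for the case $M=S^1$ and to \cite{LMMS} for a general compact metric alphabet, so there is no in-text argument to compare line by line. Measured against those sources, your outline is correct and essentially reconstructs their strategy: the eigenmeasure via Schauder--Tychonoff applied to $\nu\mapsto \mathscr{L}_f^{*}\nu/(\mathscr{L}_f^{*}\nu)(\mathbf{1})$ and the eigenfunction via uniform H\"older (distortion) bounds plus Arzel\`a--Ascoli are exactly the arguments used there, while your Birkhoff-cone treatment of simplicity and the spectral gap is the mechanism of the companion work \cite{CL-SIL} on the Walters class, which contains the H\"older potentials considered here; so nothing in your plan deviates from the literature this paper leans on, and each step is sound.

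Two details deserve to be made explicit. First, with your Ces\`aro averages $h_n$ the eigen-equation comes from the telescoping identity $\lambda_f^{-1}\mathscr{L}_f h_n-h_n=n^{-1}\bigl(\lambda_f^{-n}\mathscr{L}_f^{n}\mathbf{1}-\mathbf{1}\bigr)$, so you need a two-sided uniform bound $c\le \lambda_f^{-n}\mathscr{L}_f^{n}\mathbf{1}\le C$; this follows from the distortion estimate combined with the normalization $\int \lambda_f^{-n}\mathscr{L}_f^{n}\mathbf{1}\,d\nu_f=1$, which is available only because you constructed $\nu_f$ first --- your ordering of the steps is essential and worth stating. Second, the ``main obstacle'' you flag at the end is in fact benign in this setting: for $\mathbf{a}\in M^n$ one has $d_{\Omega}(\sigma^{j}(\mathbf{a}x),\sigma^{j}(\mathbf{a}y))\le 2^{-(n-j)}d_{\Omega}(x,y)$, hence $|S_nf(\mathbf{a}x)-S_nf(\mathbf{a}y)|\le \mathrm{Hol}_{\alpha}(f)\,d_{\Omega}(x,y)^{\alpha}\sum_{k\ge 1}2^{-k\alpha}$ uniformly in $n$ and in $\mathbf{a}$, and since $\mu$ is a probability measure the integration over $M^n$ costs nothing; the finite-alphabet bookkeeping transfers verbatim. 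For the cone step you should also record that the image cone has finite projective diameter (this uses the boundedness of $d_{\Omega}$ and the full support of $\mu$), after which simplicity, item (ii), and exponential convergence on $C^{\alpha}(\Omega)$ follow, and your density argument correctly upgrades (iii) to all of $C(\Omega)$ via the positivity bound $|\lambda_f^{-n}\mathscr{L}_f^{n}\varphi|\le \|\varphi\|_0\,\lambda_f^{-n}\mathscr{L}_f^{n}\mathbf{1}$.
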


\begin{proof}
See \cite{le} for the case $M=S^1$ and \cite{LMMS} 
for a general compact metric space.
\end{proof}


Following the references \cite{le,LMMS} we define 
the entropy of a shift invariant measure
and the pressure of the potential $f$, respectively, as follows
\[
h(\nu)
=
\inf_{f \in C^{\alpha}(\Omega)}
\left\{-\int_{\Omega}f d\nu+ \log\lambda_{f}  \right\}
\; \text{and} \;\,
P(f)
=
\sup_{\nu\in \mathscr{M}_{\sigma}}
\left\{h(\nu)+\int_{\Omega}f\, d\nu \right \},
\]
where $\mathscr{M}_{\sigma}$ is the set of all shift invariant Borel 
probability measures. 

\begin{proposition}[Variational Principle]\label{Principio variacional}
For each $f \in C^{\alpha}(\Omega)$ we have for all 
$x\in \Omega$ that 
\[
P(f)
=
\lim_{n\to\infty}
\frac{1}{n}
\log[ \mathscr{L}_{f}^n({\bf 1})(x)]
=
\log\lambda_{f}
=\sup_{\nu\in \mathscr{M}_{\sigma}}
\left\{h(\nu)+\int_{\Omega}f\, d\nu \right \}.
\]
Moreover the  supremum is attained by $m_f=h_f \nu_f$.  
\end{proposition}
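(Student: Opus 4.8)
The only substantial equalities to establish are $\lim_{n\to\infty}\frac1n\log[\mathscr{L}_f^n(\mathbf 1)(x)]=\log\lambda_f$ and $\log\lambda_f=\sup_{\nu\in\mathscr{M}_{\sigma}}\{h(\nu)+\int_\Omega f\,d\nu\}$, since the last equality in the statement is merely the definition of $P(f)$. The first is immediate from part (iii) of Theorem~\ref{teo-RPF-compacto} with $\varphi=\mathbf 1$: one gets $\|\lambda_f^{-n}\mathscr{L}_f^n\mathbf 1-h_f\|_0\to 0$, and because $h_f$ is continuous and strictly positive on the compact space $\Omega$ it is bounded away from $0$ and $\infty$. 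Taking logarithms, dividing by $n$ and letting $n\to\infty$ then gives $\frac1n\log[\mathscr{L}_f^n\mathbf 1(x)]\to\log\lambda_f$, uniformly in $x$. The same argument applies to every potential $g\in C^{\alpha}(\Omega)$, a fact I will use below.

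I would next record that $m_f=h_f\nu_f$, normalized so that $\int h_f\,d\nu_f=1$, is a $\sigma$-invariant probability measure. The key is the intertwining identity $\mathscr{L}_f((\varphi\circ\sigma)\,\psi)=\varphi\,\mathscr{L}_f(\psi)$, which follows at once from $\sigma(ax)=x$ in the definition of $\mathscr{L}_f$. Applying $\mathscr{L}_f^{*}\nu_f=\lambda_f\nu_f$ to $\psi=(\varphi\circ\sigma)h_f$ and using $\mathscr{L}_f h_f=\lambda_f h_f$ yields $\int\varphi\circ\sigma\,dm_f=\int\varphi\,dm_f$ for all $\varphi\in C(\Omega)$, as desired. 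The easy half of the variational principle is then immediate: for any $\nu\in\mathscr{M}_{\sigma}$, choosing $g=f$ in the infimum defining $h(\nu)$ gives $h(\nu)\le-\int f\,d\nu+\log\lambda_f$, whence $h(\nu)+\int f\,d\nu\le\log\lambda_f$ and $P(f)\le\log\lambda_f$.

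For the reverse inequality it suffices to prove the tangent estimate $\log\lambda_g-\log\lambda_f\ge\int(g-f)\,dm_f$ for every $g\in C^{\alpha}(\Omega)$. Indeed, rewriting it as $-\int g\,dm_f+\log\lambda_g\ge-\int f\,dm_f+\log\lambda_f$ and taking the infimum over $g$ gives $h(m_f)\ge-\int f\,dm_f+\log\lambda_f$, while $g=f$ gives the opposite bound; hence $h(m_f)+\int f\,dm_f=\log\lambda_f$, which both yields $P(f)\ge\log\lambda_f$ and shows the supremum is attained at $m_f$. To prove the tangent estimate I would fix $x$ and introduce the fiber probability measures $d\rho_f^{(n)}=e^{S_nf}\,d\mu^n/\mathscr{L}_f^n\mathbf 1(x)$ on $M^n$, where $S_nf=\sum_{k=0}^{n-1}f\circ\sigma^k$, so that $\mathscr{L}_g^n\mathbf 1(x)/\mathscr{L}_f^n\mathbf 1(x)=\int e^{S_n(g-f)}\,d\rho_f^{(n)}$. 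Jensen's inequality for the convex function $\exp$ gives
\[
\tfrac1n\log\mathscr{L}_g^n\mathbf 1(x)-\tfrac1n\log\mathscr{L}_f^n\mathbf 1(x)\ \ge\ \tfrac1n\int S_n(g-f)\,d\rho_f^{(n)},
\]
and by the first paragraph the left-hand side converges to $\log\lambda_g-\log\lambda_f$.

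The main obstacle is therefore to show that the right-hand side converges to $\int(g-f)\,dm_f$. Writing $\psi=g-f$ and iterating the intertwining identity, $\mathscr{L}_f^n(\psi\circ\sigma^j)=\mathscr{L}_f^{\,n-j}(\psi\cdot\mathscr{L}_f^j\mathbf 1)$, so each averaged term equals $\mathscr{L}_f^{\,n-j}(\psi\,\mathscr{L}_f^j\mathbf 1)(x)/\mathscr{L}_f^n\mathbf 1(x)$. Inserting the asymptotics $\mathscr{L}_f^j\mathbf 1\approx\lambda_f^j h_f$ and $\mathscr{L}_f^{\,n-j}(\psi h_f)\approx\lambda_f^{\,n-j}h_f\int\psi h_f\,d\nu_f$ from Theorem~\ref{teo-RPF-compacto}, and recalling $dm_f=h_f\,d\nu_f$, one finds $\int(\psi\circ\sigma^j)\,d\rho_f^{(n)}\to\int\psi\,dm_f$ for indices with both $j$ and $n-j$ large, while the finitely many boundary indices are bounded crudely by $\|\psi\|_0$ and contribute $O(1/n)$ to the Cesàro mean $\frac1n\sum_{j=0}^{n-1}\int(\psi\circ\sigma^j)\,d\rho_f^{(n)}$. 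The delicate point is to make these approximations uniform in $j$ simultaneously, that is, to propagate the vanishing eigenfunction error through the remaining iterates of $\mathscr{L}_f$ without amplification; here the uniform convergence in part (iii) together with the lower bound $h_f\ge c>0$ are exactly what is needed. Passing to the limit yields $\log\lambda_g-\log\lambda_f\ge\int\psi\,dm_f$, which completes the proof.
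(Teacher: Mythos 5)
Your proof is correct, but it takes a genuinely different route from the paper: the paper gives no argument for this proposition at all, deferring entirely to \cite{LMMS} (Corollary 1), whereas you give a self-contained proof modulo Theorem \ref{teo-RPF-compacto}. Your structure is the classical tangent-functional argument: the identity $\lim_n \frac1n\log\mathscr{L}_f^n\mathbf{1}=\log\lambda_f$ from part (iii) of Theorem \ref{teo-RPF-compacto} plus positivity of $h_f$; invariance of $m_f=h_f\nu_f$ via the intertwining identity (which is exactly the paper's Lemma \ref{lema-phi1-fora}); the inequality $P(f)\le\log\lambda_f$ for free from the infimum definition of $h(\nu)$ by choosing $g=f$; and the hard direction reduced to the tangent estimate $\log\lambda_g-\log\lambda_f\ge\int(g-f)\,dm_f$, proved by Jensen at finite level and a passage to the limit. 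This is very much in the spirit of Ma\~n\'e's convexity argument, which the paper itself follows later, and it buys something the citation does not: it makes transparent that $m_f$ is an equilibrium state precisely because it is a tangent functional to the convex pressure, in exact agreement with the derivative formula of Theorem \ref{Pressure derivative}. Two remarks. First, the one step you leave as a sketch --- the uniform convergence of $\frac1n\,\mathscr{L}_f^n(S_n\psi)/\mathscr{L}_f^n\mathbf{1}$ to $\int\psi h_f\,d\nu_f$ --- is precisely the (unnamed) Lemma the paper states and proves just after Theorem \ref{Pressure derivative}, via the decomposition $\lambda_f^{-n}\mathscr{L}_f^n(S_n\psi)=\sum_{j=0}^{n-1}\lambda_f^{-(n-j)}\mathscr{L}_f^{n-j}\bigl(\psi\,\lambda_f^{-j}\mathscr{L}_f^j\mathbf{1}\bigr)$, two Ces\`aro limits, and the positivity bound $\|\lambda_f^{-m}\mathscr{L}_f^m\varphi\|_0\le\|\varphi\|_0\,\|\lambda_f^{-m}\mathscr{L}_f^m\mathbf{1}\|_0$; your somewhat vague appeal to making the approximations ``uniform in $j$ simultaneously'' should be replaced by exactly this Ces\`aro argument (no double-index uniformity is actually needed), so you can close the gap by invoking that Lemma verbatim. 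Second, you correctly flag the normalization $\int h_f\,d\nu_f=1$, which the paper uses implicitly; without it $m_f$ need not be a probability measure and the statement about attainment would be off by a constant.
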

\begin{proof}
See \cite{LMMS} Corollary 1.
\end{proof}

\begin{theorem}\label{Pressure Analiticity}
The function defined by $C^{\alpha}(\Omega)\ni f\mapsto P(f)\in \mathbb{R}$ 
is a real analytic function.
\end{theorem}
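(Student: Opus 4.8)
The plan is to reduce the statement to the analyticity of the leading eigenvalue. By the Variational Principle (Proposition~\ref{Principio variacional}) we have $P(f)=\log\lambda_{f}$, where $\lambda_f>0$ is the simple leading eigenvalue of $\mathscr{L}_f$ provided by Theorem~\ref{teo-RPF-compacto}. Since $\log$ is real analytic on $(0,\infty)$, it suffices to prove that the map $C^{\alpha}(\Omega)\ni f\mapsto \lambda_f\in\mathbb{R}$ is real analytic, and for this I would use analytic perturbation theory (Kato) for an isolated simple eigenvalue. The first step is therefore to show that $f\mapsto \mathscr{L}_f$, regarded as a map from $C^{\alpha}(\Omega)$ into the Banach space $\mathcal{L}(C^{\alpha}(\Omega))$ of bounded linear operators, is real analytic.

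To see this I would factor the dependence on $f$ through the weight $e^{f}$. For $g\in C^{\alpha}(\Omega)$ set $\mathscr{L}^{(g)}\varphi(x)=\int_M g(ax)\varphi(ax)\,d\mu(a)$, so that $\mathscr{L}_f=\mathscr{L}^{(e^{f})}$. The map $g\mapsto \mathscr{L}^{(g)}$ is linear, and it is bounded from $C^{\alpha}(\Omega)$ into $\mathcal{L}(C^{\alpha}(\Omega))$: the estimate $\|\mathscr{L}^{(g)}\varphi\|_0\le \|g\|_0\|\varphi\|_0$ is immediate, while the Hölder seminorm is controlled using the contraction identity $d_{\Omega}(ax,ay)=\tfrac12\, d_{\Omega}(x,y)$, which gives $\mathrm{Hol}_{\alpha}(\mathscr{L}^{(g)}\varphi)\le 2^{-\alpha}\|g\|_{\alpha}\|\varphi\|_{\alpha}$. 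A bounded linear map is entire; and because $C^{\alpha}(\Omega)$ is a Banach algebra, the exponential $f\mapsto e^{f}$ is an entire map of $C^{\alpha}(\Omega)$ into itself. Composing the two yields the analyticity of $f\mapsto \mathscr{L}_f$.

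The second step exploits the spectral information in Theorem~\ref{teo-RPF-compacto}. Fix $f_0\in C^{\alpha}(\Omega)$. By parts (i) and (ii) of that theorem, $\lambda_{f_0}$ is a simple eigenvalue of $\mathscr{L}_{f_0}$, isolated from the remainder of the spectrum, so one may choose a positively oriented circle $\gamma$ enclosing $\lambda_{f_0}$ and no other spectral point. For $f$ near $f_0$ the resolvent $(z-\mathscr{L}_f)^{-1}$ exists for every $z\in\gamma$ and is analytic in $f$, hence so are the spectral projection $Q_f=\frac{1}{2\pi i}\oint_{\gamma}(z-\mathscr{L}_f)^{-1}\,dz$ and the operator $\frac{1}{2\pi i}\oint_{\gamma} z\,(z-\mathscr{L}_f)^{-1}\,dz=\lambda_f\,Q_f$. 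As $Q_f$ has rank one, pairing $\lambda_f Q_f h_{f_0}$ with the eigenmeasure $\nu_{f_0}$ expresses $\lambda_f$ as the ratio of the two analytic scalars $\int (\lambda_f Q_f h_{f_0})\,d\nu_{f_0}$ and $\int (Q_f h_{f_0})\,d\nu_{f_0}$, whose denominator equals $\int h_{f_0}\,d\nu_{f_0}\neq 0$ at $f_0$ and hence stays nonzero nearby. Thus $f\mapsto\lambda_f$ is real analytic near $f_0$, and $f_0$ being arbitrary, on all of $C^{\alpha}(\Omega)$; composing with $\log$ finishes the proof.

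The main obstacle is the first step, and it is precisely where the uncountable-alphabet setting enters: one must upgrade the pointwise action of the weighted operators to genuine operator-norm analyticity of $f\mapsto\mathscr{L}_f$, which is what makes Kato's theory applicable. This in turn rests on the uniform Hölder estimate for $\mathscr{L}^{(g)}$ controlled linearly by $\|g\|_{\alpha}$, the same estimate that underlies the proof of Theorem~\ref{teo-RPF-compacto}. Once it is in place the perturbation-theoretic argument is entirely standard, so I expect the remaining work to be routine.
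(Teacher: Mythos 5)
Your proof is correct, and it is essentially the argument this paper relies on: the paper gives no inline proof of Theorem \ref{Pressure Analiticity}, deferring entirely to \cite{CL-SIL}, where the route is exactly yours --- analyticity of $f\mapsto\mathscr{L}_f$ as a map into $\mathcal{L}(C^{\alpha}(\Omega))$ via the Banach-algebra exponential (cf.\ also \cite{ERR}), followed by Kato perturbation of the simple isolated maximal eigenvalue and the identity $P(f)=\log\lambda_f$ from Proposition \ref{Principio variacional}. The one step you leave implicit, that the unique spectral point of $\mathscr{L}_f$ inside $\gamma$ is the RPF eigenvalue $\lambda_f$ of Theorem \ref{teo-RPF-compacto} and not some other spectral value, is standard and easily closed (by upper semicontinuity of the spectrum, or by noting the perturbed rank-one eigenprojection yields an eigenfunction near $h_{f_0}$, hence strictly positive, and a strictly positive eigenfunction forces the eigenvalue to be $\lambda_f$ by part (iii) of Theorem \ref{teo-RPF-compacto}).
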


\begin{proof}
See \cite{CL-SIL} for a proof.
\end{proof}

Let $f\in C^{\alpha}(\Omega)$ be a Hölder continuous potential, $\nu_f$
be the measure given by Theorem \ref{teo-RPF-compacto} and
$\varphi, \psi\in C^{\alpha}(\Omega)$. For each $n\in \mathbb{Z}$
we define the correlation function
\begin{equation}\label{correlation-function}
C_{\varphi,\psi,m_f}(n)
=
\int (\varphi\circ \sigma^n)\psi \, dm_f
-\int \varphi \,dm_f \int \psi\, dm_f.
\end{equation}
We have that the above correlation function  has exponential decay. 
More precisely we have the following proposition:
\begin{proposition}\label{Decay Correlation}
For each $n\in \mathbb{N}$ let $C(n)$ denote the correlation function 
defined by \eqref{correlation-function}. Then there exist constants
$K>0$ and $0<\tau<1$ such that $|C(n)|\leq K \tau^n$.
\end{proposition}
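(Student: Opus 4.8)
The plan is to collapse the correlation function onto a single application of the normalized transfer operator and then extract the decay rate directly from the spectral gap in part (ii) of Theorem \ref{teo-RPF-compacto}. The whole argument rests on one algebraic identity and the eigendata duality, after which the estimate is essentially bookkeeping in the Banach space $C^{\alpha}(\Omega)$.

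First I would record the fundamental commutation relation of the Ruelle operator. For any $g,\phi\in C(\Omega)$ and every $x\in\Omega$, the fact that $\sigma(ax)=x$ gives
\[
\mathscr{L}_f\bigl((g\circ\sigma)\,\phi\bigr)(x)
=
\int_M e^{f(ax)}\,g(x)\,\phi(ax)\,d\mu(a)
=
g(x)\,\mathscr{L}_f(\phi)(x),
\]
and iterating yields $\mathscr{L}_f^n\bigl((g\circ\sigma^n)\phi\bigr)=g\,\mathscr{L}_f^n(\phi)$. Writing $m_f=h_f\nu_f$ and using $\mathscr{L}_f^{*}\nu_f=\lambda_f\nu_f$ in the iterated form $\int g\,d\nu_f=\lambda_f^{-n}\int\mathscr{L}_f^n(g)\,d\nu_f$, I then compute
\[
\int(\varphi\circ\sigma^n)\,\psi\,dm_f
=
\lambda_f^{-n}\int\mathscr{L}_f^n\bigl((\varphi\circ\sigma^n)\,\psi h_f\bigr)\,d\nu_f
=
\int\varphi\,\bigl[\lambda_f^{-n}\mathscr{L}_f^n(\psi h_f)\bigr]\,d\nu_f.
\]

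Next the spectral gap enters. I would decompose $\mathscr{L}_f=\lambda_f\Pi+N$, where $\Pi(g)=h_f\int g\,d\nu_f$ is the rank-one spectral projection onto the leading eigenspace (using the normalization $\int h_f\,d\nu_f=1$, which makes $m_f$ a probability measure), and $N=\mathscr{L}_f(I-\Pi)$ satisfies $\Pi N=N\Pi=0$. By part (ii) the spectral radius of $N$ is some $r<\lambda_f$, so for any $\tau$ with $r/\lambda_f<\tau<1$ the Gelfand spectral radius formula furnishes a constant $C_0$ with $\|\lambda_f^{-n}N^n\|_{\alpha}\le C_0\tau^n$. Since
\[
\lambda_f^{-n}\mathscr{L}_f^n(\psi h_f)
=
h_f\int\psi\,dm_f+\lambda_f^{-n}N^n(\psi h_f),
\]
the projection term in the previous display reproduces exactly $\int\varphi\,dm_f\int\psi\,dm_f$, leaving $C(n)=\int\varphi\,[\lambda_f^{-n}N^n(\psi h_f)]\,d\nu_f$. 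Estimating with $\nu_f$ a probability measure and $\|\cdot\|_0\le\|\cdot\|_{\alpha}$ gives $|C(n)|\le C_0\|\varphi\|_0\,\|\psi h_f\|_{\alpha}\,\tau^n$, which is the claim with $K=C_0\|\varphi\|_0\|\psi h_f\|_{\alpha}$.

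The computation itself is mechanical; the genuine content is that everything must take place in $C^{\alpha}(\Omega)$, where the quasi-compactness of $\mathscr{L}_f$ holds, rather than in $C(\Omega)$, where one only has the weaker convergence of part (iii) with no rate. Accordingly, the point requiring care is verifying that the vector on which the gap acts, namely $\psi h_f$, genuinely lies in $C^{\alpha}(\Omega)$: this follows because $\psi\in C^{\alpha}(\Omega)$ by hypothesis and $h_f$ is the Hölder eigenfunction supplied by Theorem \ref{teo-RPF-compacto}, and the product of two bounded $\alpha$-Hölder functions is again $\alpha$-Hölder. Note that $\varphi$ is needed only as a bounded measurable factor inside the final integral, so no regularity beyond continuity is used for it.
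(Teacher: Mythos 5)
Your proof is correct and follows essentially the same route as the paper: the pull-out identity $\mathscr{L}_f^n((\varphi\circ\sigma^n)\psi h_f)=\varphi\,\mathscr{L}_f^n(\psi h_f)$ (the paper's Lemma \ref{lema-phi1-fora}), the duality $(\mathscr{L}_f^*)^n\nu_f=\lambda_f^n\nu_f$, the rank-one projection $\pi_f(g)=h_f\int g\,d\nu_f$ (Lemma \ref{lema-proj-spectral-pif}), and the Gelfand spectral radius formula applied to $\mathscr{L}_f(I-\pi_f)$ on $C^{\alpha}(\Omega)$. If anything, your bookkeeping is slightly tighter than the paper's, since you correctly carry the $\|\psi h_f\|_{\alpha}$ norm through the spectral radius estimate (which lives in $C^{\alpha}(\Omega)$, not $C(\Omega)$), where the paper's displayed bound loosely writes $\|\varphi\|_0$ on the right-hand side.
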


The proof when $M$ is finite is given in \cite{Baladi}. Due to 
Theorem \ref{teo-RPF-compacto} this proof can be easily adapted to
the case where $M$ is compact metric space. 
We will include the details here for completeness.
Before prove the above theorem, we present two auxiliary lemmas.
\begin{lemma}\label{lema-proj-spectral-pif}
Let $f\in C^{\alpha}(\Omega)$, and $\partial D$ the boundary of a disc $D$ with  center in $\lambda_f,$
then the spectral projection 
\[
\pi_{f}\equiv \pi_{\mathscr{L}_{f}}=\int_{\partial D}(\lambda I-\mathscr{L}_{f})^{-1}d\lambda.
\] 
 is given by 
$
\pi_{f}(\varphi) 
= 
\big(\int \varphi\, d\nu_{f}\big) \cdot h_{f}.
$
\end{lemma}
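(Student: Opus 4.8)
The plan is to exploit the spectral gap provided by Theorem~\ref{teo-RPF-compacto}: items (i) and (ii) make $\lambda_f$ an isolated and simple point of the spectrum of $\mathscr{L}_f$ acting on $C^{\alpha}(\Omega)$. By the Riesz functional calculus the contour integral defining $\pi_f$ is a bounded projection commuting with $\mathscr{L}_f$ whose range is the generalized eigenspace of $\lambda_f$; simplicity forces this range to be the one--dimensional space $\mathrm{span}\{h_f\}$. Hence $\pi_f$ has rank one and there is a bounded linear functional $\ell$ on $C^{\alpha}(\Omega)$ with $\pi_f(\varphi)=\ell(\varphi)\,h_f$ and $\ell(h_f)=1$.

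First I would set $N:=\mathscr{L}_f(I-\pi_f)$, so that $\mathscr{L}_f=\lambda_f\pi_f+N$. Since $\pi_f$ commutes with $\mathscr{L}_f$ and $\pi_f^2=\pi_f$, one checks $\pi_f N=N\pi_f=0$, and on the invariant complement $(I-\pi_f)C^{\alpha}(\Omega)$ the spectrum of $\mathscr{L}_f$ lies in the disc of radius $r<\lambda_f$ of item (ii); thus the spectral radius of $N$ satisfies $\rho(N)\le r<\lambda_f$. By induction $\mathscr{L}_f^{\,n}=\lambda_f^{\,n}\pi_f+N^{\,n}$, and the spectral radius formula gives $\|\lambda_f^{-n}N^{\,n}\|_{\alpha}\to0$. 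Therefore
\[
\lambda_f^{-n}\mathscr{L}_f^{\,n}\varphi
=\pi_f\varphi+\lambda_f^{-n}N^{\,n}\varphi
\xrightarrow[n\to\infty]{}\pi_f\varphi
\]
in $\|\cdot\|_{\alpha}$, hence also in $\|\cdot\|_0$.

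Next I would compare this with item (iii) of Theorem~\ref{teo-RPF-compacto}, which gives $\lambda_f^{-n}\mathscr{L}_f^{\,n}\varphi\to h_f\int\varphi\,d\nu_f$ in $\|\cdot\|_0$ for every continuous $\varphi$, in particular for $\varphi\in C^{\alpha}(\Omega)$. Uniqueness of the limit in $(C(\Omega),\|\cdot\|_0)$ then yields $\pi_f(\varphi)=\big(\int\varphi\,d\nu_f\big)h_f$ for all $\varphi\in C^{\alpha}(\Omega)$, which is the assertion. The normalization $\int h_f\,d\nu_f=1$ needed for consistency with $\ell(h_f)=1$ is automatic: putting $\varphi=h_f$ in item (iii) gives the constant sequence $\lambda_f^{-n}\mathscr{L}_f^{\,n}h_f=h_f$, whose limit is $h_f\int h_f\,d\nu_f$, so strict positivity of $h_f$ forces $\int h_f\,d\nu_f=1$.

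The step I expect to require the most care is not deep but bookkeeping: verifying that the Riesz projection $\pi_f$ really is the rank--one projection onto $\mathrm{span}\{h_f\}$ along an $\mathscr{L}_f$--invariant complement on which the spectral radius drops strictly below $\lambda_f$, which is precisely where simplicity in (i) and the gap in (ii) are used. A more direct alternative, bypassing the limit, would identify $\ell$ by duality: from $\pi_f\mathscr{L}_f=\mathscr{L}_f\pi_f$ one gets $\ell(\mathscr{L}_f\varphi)=\lambda_f\ell(\varphi)$, so $\ell$ is an eigenfunctional of $\mathscr{L}_f^{*}$ at $\lambda_f$; since this eigenvalue is simple, its eigenspace is spanned by $\nu_f$, and $\ell(h_f)=1$ fixes the constant, giving $\ell=\nu_f$.
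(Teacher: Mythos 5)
Your proof is correct. The paper actually omits the proof of this lemma, dismissing it as a ``straightforward computation,'' and your argument --- writing $\mathscr{L}_f=\lambda_f\pi_f+N$ with $\pi_f N=N\pi_f=0$ and $\rho(N)<\lambda_f$, deducing $\lambda_f^{-n}\mathscr{L}_f^n\varphi\to\pi_f\varphi$, and matching this against item (iii) of Theorem~\ref{teo-RPF-compacto} to conclude $\pi_f(\varphi)=\bigl(\int\varphi\,d\nu_f\bigr)h_f$ --- is precisely the spectral-calculus machinery the paper itself deploys immediately afterwards in the proof of Proposition~\ref{Decay Correlation}, so you have supplied the intended argument rather than a divergent one. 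Two small points you handle implicitly but correctly: the contour integral in the statement should carry a factor $\frac{1}{2\pi i}$ and the disc $D$ must be small enough that it isolates $\lambda_f$ from the rest of the spectrum (possible by the gap in item (ii)), and your derivation of the normalization $\int h_f\,d\nu_f=1$ by taking $\varphi=h_f$ in item (iii) is exactly what makes the two expressions for the rank-one projection consistent.
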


\begin{lemma}\label{lema-phi1-fora}
Let be $\varphi,\psi \in C^{\alpha}(\Omega)$ then
$
\mathscr{L}_{f}^n(\varphi\circ \sigma^n \cdot \psi \cdot h_{f})
=
\varphi \mathscr{L}_{f}^n( \psi h_{f}). 
$
\end{lemma}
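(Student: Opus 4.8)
The plan is to exploit the pull-out property of the transfer operator: a factor of the form $\varphi\circ\sigma^n$ is constant with respect to the variables integrated over by the $n$-th iterate $\mathscr{L}_f^n$, hence it may be extracted. First I would unfold the iterate into a single expression. Writing $\mathbf{a}=(a_n,\dots,a_1)\in M^n$ and letting $\mathbf{a}x=(a_n,\dots,a_1,x_1,x_2,\dots)$ denote the sequence obtained by prepending these symbols to $x$, an easy induction on $n$ from the definition of $\mathscr{L}_f$ yields
\[
\mathscr{L}_f^n(g)(x)=\int_{M^n} e^{S_n f(\mathbf{a}x)}\, g(\mathbf{a}x)\, d\mu^{n}(\mathbf{a}),
\]
where $S_n f(y)=\sum_{k=0}^{n-1} f(\sigma^{k}y)$ is the Birkhoff sum and $\mu^{n}$ is the $n$-fold product of $\mu$.

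The decisive observation is purely combinatorial: prepending $n$ coordinates and then shifting $n$ times returns the original point, that is $\sigma^{n}(\mathbf{a}x)=x$ for every $\mathbf{a}\in M^n$. Therefore $(\varphi\circ\sigma^{n})(\mathbf{a}x)=\varphi(x)$ is independent of the integration variable $\mathbf{a}$. Substituting $g=(\varphi\circ\sigma^{n})\cdot\psi\cdot h_f$ into the displayed formula and pulling the constant factor $\varphi(x)$ out of the integral gives
\[
\mathscr{L}_f^n\big((\varphi\circ\sigma^{n})\,\psi\, h_f\big)(x)
=\varphi(x)\int_{M^n} e^{S_n f(\mathbf{a}x)}\,\psi(\mathbf{a}x)\,h_f(\mathbf{a}x)\,d\mu^{n}(\mathbf{a})
=\varphi(x)\,\mathscr{L}_f^n(\psi h_f)(x),
\]
which is precisely the asserted identity.

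If one prefers to avoid the explicit $n$-fold integral, the same conclusion follows by induction on $n$ after establishing the slightly more general identity $\mathscr{L}_f^n((\varphi\circ\sigma^n)\,g)=\varphi\,\mathscr{L}_f^n(g)$ for an arbitrary $g\in C(\Omega)$. The base case $n=1$ is immediate from $\sigma(ax)=x$, while the inductive step uses the factorization $\varphi\circ\sigma^{n+1}=(\varphi\circ\sigma)\circ\sigma^{n}$ together with the semigroup property $\mathscr{L}_f^{n+1}=\mathscr{L}_f\circ\mathscr{L}_f^{n}$. I do not expect any genuine obstacle in this argument; the only point that requires care is the bookkeeping of the prepended coordinates and the direction of the shift, so that the cancellation $\sigma^n(\mathbf{a}x)=x$ is applied with the correct indices.
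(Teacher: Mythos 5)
Your proof is correct: the iterated-integral formula for $\mathscr{L}_f^n$, the observation that $\sigma^n(\mathbf{a}x)=x$, and the resulting pull-out of $\varphi(x)$ constitute exactly the ``straightforward computation'' the paper alludes to when it omits the proof of this lemma (and it is the same identity the paper itself invokes later, in Claim~1, in the form $\mathscr{L}_f^n(\varphi\circ\sigma^n)=\varphi\,\mathscr{L}_f^n\mathbf{1}$). Nothing is missing, and your more general version with arbitrary $g\in C(\Omega)$ in place of $\psi h_f$ is a harmless and accurate strengthening.
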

The proof of both lemmas are 
straightforward computation so they will be omitted.

\bigskip 
\noindent
{\bf Proof of  Proposition \ref{Decay Correlation}}.
Since $m_f=h_{f} d\nu_{f}$ it follows from  
the definition of the correlation function that
\begin{align*}\label{D. Correlation 2}
|C_{\varphi,\psi,m_f }(n)|
&=
\left| 
\int (\varphi\circ \sigma^n)\psi h_{f}\, d\nu_{f}
- 
\int\varphi  h_{f}\, d\nu_{f}
\int\psi h_{f}\, d\nu_{f}                            
\right|.
\end{align*}
Notice that 
$
(\mathscr{L}^{*}_{f})^n\nu_{f}
=
\lambda_{f}^n\nu_{f}
$ 
and therefore the rhs above is equal to 
\[
\left|
\int\lambda_{f}^{-n}
\mathscr{L}_{f}^n((\varphi\circ \sigma^n)\psi h_{f})
\, d\nu_{f}
-
\int\varphi  h_{f}\, d\nu_{f}
\int\psi h_{f}\,  d\nu_{f}                            
\right|.
\]
By using the Lemma \ref{lema-phi1-fora} and performing simple 
algebraic computations we get 
\begin{align} \label{estimativa-1-dec-pol}
|C_{\varphi,\psi,m_f }(n)|
\leq 
\left(   \int|\varphi|\, d\nu_{f}    \right)
\left\|
\lambda^{-n}_{f} \mathscr{L}_{f}^n
\left(
	\psi h_{f}-  h_{f}\int\psi h_{f}\, d\nu_{f}  
\right) 
\right\|_{0}.
\end{align}

By Theorem \ref{teo-RPF-compacto} we know that the spectrum of 
$\mathscr{L}_{f}: C^{\alpha}(\Omega)\to C^{\alpha}(\Omega)$ 
consists in a simple eigenvalue $\lambda_{f}>0$ and a subset 
of a disc of radius strictly smaller than $\lambda_{f}.$  
Set 
$
\tau=\sup\{|z|; |z|<1~\textnormal{and}~z\cdot\lambda_{f} \in 
\mathrm{Spec}(\mathscr{L}_{f})\}.
$ 
The existence of the spectral gap guarantees that $\tau<1$. 
Let $\pi_{f}$ 
the spectral projection associated to eigenvalue $\lambda_{f}$,  
then, 
the spectral radius of the operator 
$\mathscr{L}_{f}(I-\pi_{f})$ is exactly $\tau\cdot \lambda_f$. 
Since the commutator $[\mathscr{L}_{f},\pi_{f}]=0$,
we get $\forall n \in\mathbb{N}$ that
$
[\mathscr{L}_{f}(I-\pi_{f})]^n
=
\mathscr{L}_{f}^{n}(I-\pi_{f}).
$
From the spectral radius formula  
it follows that for each choice of $\widetilde{\tau}>\tau$ 
there is $n_0\equiv n_0(\widetilde{\tau})\in\mathbb{N}$ 
so that for all $n\geq n_0$ we have
$
\|\mathscr{L}_{f}^{n}(\varphi-\pi_{f}\varphi)\|_0
\leq 
\lambda_{f}^n\widetilde{\tau}^n
\|\varphi\|_0,~\forall\varphi\in  C^{\alpha}(\Omega).
$
Therefore there is a constant 
$C( \widetilde{\tau})>0$ 
such that for every $n\geq 1$
\[ 
\|\mathscr{L}_{f}^n(\varphi-\pi_{f}\varphi)\|_0
\leq 
C( \widetilde{\tau}) \ \lambda_{f}^n\ \widetilde{\tau}^n\ \|\varphi\|_0
\qquad\forall\varphi\in  C^{\alpha}(\Omega).
\]
By using the Lemma \ref{lema-proj-spectral-pif} 
and the above upper bound 
in the inequality \eqref{estimativa-1-dec-pol}
we obtain 
\begin{align*}
|C_{\varphi,\psi,m_f}(n)|
&\leq
\left( \int |\varphi|\, d\nu_{f} \right)
C\ \widetilde{\tau}^n\ \|\psi h_{f}\|_0
\\
&\leq
C(\widetilde{\tau}) 
\|h_{f}\|_{0} 
\left( \int |\varphi|d\nu_{f} \right)
\|\psi\|_0 \ \widetilde{\tau}^n.
\tag*{\qed} 
\end{align*}

\section{Main Results}
Proposition \ref{Principio variacional} ensures for any H\"older potential $f$
that the limit
$P(f)
=
\lim_{n\to\infty}
n^{-1}
\log[ \mathscr{L}_{f}^n({\bf 1})(x)]
$
always exist and is independent of $x\in \Omega$. 
In what follows, we will extend this result for all continous potentials
$f\in C(\Omega)$. This is indeed a surprising result and it does not have a
counter part on one-dimensional two-sided lattices, 
due to the existence of metastable states discovered by Sewell in \cite{Sewell}.
Before present the proof of this fact, 
we want to explain what is the mechanism behind the absence of metastable 
states for one-dimensional systems on the lattice $\mathbb{N}$.
The absence of metastable states for continuous potentials and finite state
space $M$, on one-dimensional one-sided lattices, as far as we known 
was first proved by Ricardo  Ma\~n\'e in \cite{ma}, but apparently he did not realize it.
A generalization of this result for $M$ being a compact metric space appears 
in \cite{CL16} and again no mention to metastable states is made in this paper. 
The proof of this result presented in \cite{CL16} is completely different from ours and we
believe that one presented here is more suitable to be adapted to other context. 
An alternative explanation of this fact can be found in \cite{CER17} Remark 2.2,
and following the first author of \cite{CER17} the first person to realize this
fact was Aernout van Enter. 

Let $(\widetilde{\mathscr{B}},\|\cdot\|)$ and $(\mathscr{B},|\!|\!|\cdot|\!|\!| )$ the classical 
Banach spaces of interactions defined as in \cite{Israel}. 
In case of free boundary conditions, we have for any $\Phi\in\widetilde{\mathscr{B}}$
and a finite volume $\Lambda_n\subset\mathbb{Z}$ that
$
|P_{\Lambda_n}(\Phi)-P_{\Lambda_n}(\Psi)| \leq |\!|\!|\Phi-\Psi|\!|\!|
$
and therefore the finite volume pressure with {\bf free boundary conditions} 
is 1-Lipschitz function 
from $(\widetilde{\mathscr{B}},|\!|\!|\cdot|\!|\!|)$ to $\mathbb{R}$. 
On the lattice $\mathbb{Z}$, when boundary conditions are considered 
the best we can prove is 
$
|\tau_nP_{\Lambda_n}(\Phi)-\tau_nP_{\Lambda_n}(\Psi)|
\leq 
\|\Phi-\Psi\|,
$
for interactions $\Phi,\Psi\in \mathscr{B}$.
From this we have that finite volume pressure with boundary conditions 
is 1-Lipschitz function from the smaller Banach space 
$(\mathscr{B},\|\cdot\|)$ to $\mathbb{R}$. The last inequality can not be 
improved for general boundary conditions and interactions in the big Banach space
$(\widetilde{\mathscr{B}},|\!|\!|\cdot|\!|\!|)$, because 
as we will see in the proof of Theorem \ref{Pressao-assintotica} 
it would imply the independence of the boundary 
conditions of the infinite volume pressure which is a contradiction with 
Sewell's theorem. The mechanism that prevents existence of metastable 
states for continuous potential on the lattice $\mathbb{N}$ is the 
possibility of proving that the analogous of the finite volume pressure
with boundary conditions on the lattice $\mathbb{N}$ is indeed 1-Lipschitz
function. To be more precise.

\begin{theorem}\label{Pressao-assintotica}
For each continuous potential  $f\in C(\Omega)$ there is a real number $P(f)$ such that 
$$
\lim_{n\to \infty}
\left\| \dfrac{1}{n}\log\mathscr{L}_f^n{\bf 1} -P(f)\right\|_{0}
=0.
$$
\end{theorem}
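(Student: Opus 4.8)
The plan is to exploit the fact that, although the Ruelle--Perron--Frobenius machinery of Theorem~\ref{teo-RPF-compacto} is available only for H\"older potentials, the quantity $\tfrac1n\log\mathscr{L}_f^n{\bf 1}$ depends on $f$ in a \emph{uniformly Lipschitz} way with respect to $\|\cdot\|_0$. Since the H\"older functions are dense in $(C(\Omega),\|\cdot\|_0)$ (for instance by Stone--Weierstrass, as $\Omega$ is a compact metric space), this Lipschitz control transfers the convergence already known for H\"older potentials to arbitrary continuous ones via a three-$\varepsilon$ argument.

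First I would write the iterate explicitly. Setting $S_nf=\sum_{j=0}^{n-1}f\circ\sigma^j$, iterating the definition of $\mathscr{L}_f$ gives
\[
\mathscr{L}_f^n{\bf 1}(x)
=
\int_{M^n} e^{S_nf(a_n\cdots a_1x)}\,d\mu(a_1)\cdots d\mu(a_n).
\]
From this the central estimate is immediate: for any $f,g\in C(\Omega)$ one has the pointwise bound $|S_nf-S_ng|\le n\|f-g\|_0$ along every orbit, whence
\[
e^{-n\|f-g\|_0}\,\mathscr{L}_g^n{\bf 1}(x)
\le
\mathscr{L}_f^n{\bf 1}(x)
\le
e^{\,n\|f-g\|_0}\,\mathscr{L}_g^n{\bf 1}(x).
\]
Taking logarithms, dividing by $n$, and passing to the supremum over $x$ yields the key inequality
\[
\left\|\tfrac1n\log\mathscr{L}_f^n{\bf 1}-\tfrac1n\log\mathscr{L}_g^n{\bf 1}\right\|_0
\le
\|f-g\|_0,
\qquad n\ge1.
\]
This is the announced $1$-Lipschitz property, and it holds \emph{uniformly in $n$}; that uniformity is, in my view, the heart of the matter. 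Next I would settle the H\"older case: if $f\in C^{\alpha}(\Omega)$, part~(iii) of Theorem~\ref{teo-RPF-compacto} applied to $\varphi={\bf 1}$ gives $\lambda_f^{-n}\mathscr{L}_f^n{\bf 1}\to h_f$ uniformly, and since $h_f$ is continuous and strictly positive on the compact space $\Omega$ it is bounded away from $0$ and $\infty$; hence $\tfrac1n\log\mathscr{L}_f^n{\bf 1}\to\log\lambda_f$ uniformly in $x$, the constant being $P(f)$ by Proposition~\ref{Principio variacional}.

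Finally I would assemble the two ingredients. Choose H\"older potentials $f_k\to f$ in $\|\cdot\|_0$. Applying the Lipschitz inequality to the pair $f_k,f_j$ and letting $n\to\infty$ gives $|P(f_k)-P(f_j)|\le\|f_k-f_j\|_0$, so $(P(f_k))_k$ is Cauchy in $\mathbb{R}$ and converges to a number I call $P(f)$. For this limit I estimate, for each fixed $k$,
\begin{align*}
\left\|\tfrac1n\log\mathscr{L}_f^n{\bf 1}-P(f)\right\|_0
&\le
\left\|\tfrac1n\log\mathscr{L}_f^n{\bf 1}-\tfrac1n\log\mathscr{L}_{f_k}^n{\bf 1}\right\|_0
\\
&\quad+
\left\|\tfrac1n\log\mathscr{L}_{f_k}^n{\bf 1}-P(f_k)\right\|_0
+
\bigl|P(f_k)-P(f)\bigr|,
\end{align*}
and bound the three terms respectively by $\|f-f_k\|_0$ (Lipschitz property), by a quantity tending to $0$ as $n\to\infty$ (H\"older case), and by $|P(f_k)-P(f)|$. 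Given $\varepsilon>0$, fixing $k$ large enough to control the first and third terms and then $n$ large enough to control the second yields the claim.

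I do not anticipate a serious obstacle. The only points requiring care are that the Lipschitz bound be genuinely uniform in $n$ (otherwise one could not interchange the limit in $n$ with the approximation in $k$), that the approximants $f_k$ can be taken H\"older (Lipschitz functions suffice and are dense), and that the resulting constant $P(f)$ be independent of the chosen sequence, which again follows from the same Lipschitz inequality.
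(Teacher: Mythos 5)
Your proposal is correct, and its skeleton is the same as the paper's: a $1$-Lipschitz estimate for $f\mapsto \tfrac1n\log\mathscr{L}_f^n{\bf 1}$ that is uniform in $n$, the known convergence for H\"older potentials, density of $C^{\alpha}(\Omega)$ in $(C(\Omega),\|\cdot\|_0)$, and a several-$\varepsilon$ assembly. But you execute the two key steps by genuinely different means, both simpler. The paper derives the Lipschitz bound \eqref{estimativa-dif-Phin} by computing the Fr\'echet derivative of $\Phi_n(f,x)=\tfrac1n\log\mathscr{L}_f^n{\bf 1}(x)$ in $f$ (justified by dominated convergence), bounding its operator norm by $1$, and applying the mean value theorem along the segment $tf+(1-t)\tilde f$; you obtain the identical inequality in three lines from the representation $\mathscr{L}_f^n{\bf 1}(x)=\int_{M^n}e^{S_nf(\ba x)}\,d\mu(\ba)$ together with $|S_nf-S_ng|\le n\|f-g\|_0$, with nothing left to justify. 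Second, the paper constructs the limit via Arzel\`a--Ascoli, extracting a subsequence $\Phi_{n_k}(f)$ converging to some $\Phi(f)\in C(\Omega)$ and then upgrading to full convergence; note that the equicontinuity the paper invokes is in the variable $f$, whereas Arzel\`a--Ascoli on $C(\Omega)$ requires equicontinuity in $x\in\Omega$, a point left unaddressed there. You bypass compactness arguments entirely by defining $P(f):=\lim_k P(f_k)$ through the Cauchy estimate $|P(f_k)-P(f_j)|\le\|f_k-f_j\|_0$, which also makes it transparent that the limit is a constant (the paper needs the separate argument $\Phi(\overline{C^{\alpha}(\Omega)})\subset\langle 1\rangle$) and independent of the approximating sequence. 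Finally, your derivation of \emph{uniform} convergence in the H\"older case directly from Theorem \ref{teo-RPF-compacto}(iii) with $\varphi={\bf 1}$ and the strict positivity of $h_f$ supplies a detail the paper draws from Proposition \ref{Principio variacional}, whose statement is pointwise in $x$; this uniformity is genuinely needed in the final estimate, so making it explicit is a further small improvement rather than a redundancy.
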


\begin{proof} 
It is sufficient to prove that  
$\Phi_{n}:C(\Omega) \to C(\Omega)$, given by 
$$ 
\Phi_{n}(f)(x)=\frac{1}{n}\log \mathscr{L}_f^n{\bf 1}(x)
$$
converges to a Lipschitz continuous function $\Phi:C(\Omega) \to C(\Omega)$ 
in the following sense $\|\Phi_n(f)-\Phi(f)\|_{0}\to 0$, when $n\to\infty$.
Indeed, by Proposition \ref{Principio variacional}, for any fixed $0<\alpha< 1$ we have   
$\Phi(C^{\alpha}(\Omega))\subset \langle 1\rangle$,
where $\langle 1 \rangle$ denotes the subspace generated by the constant functions
in $C(\Omega)$. Since $C^{\alpha}(\Omega)$ is a dense subset of $(C(\Omega),\|\cdot\|_{0})$
and $\Phi$ is Lipschitz, then 
$\Phi(C(\Omega))=\Phi(\overline{C^{\alpha}(\Omega)})\subset \langle 1 \rangle$.

\bigskip

In order to deduce the convergence of
$
(\Phi_{n})_{n\in \mathbb{N}},
$  
it is more convenient to identify 
$
\Phi_{n}:C(\Omega)\to C(\Omega)
$
with the function  
$
\Phi_{n}:C(\Omega)\times \Omega\rightarrow\mathbb{R},
$
given by  
$
\Phi_{n}(f,x)=(1/n)\log \mathscr{L}_f^n{\bf 1}(x).
$ 

For any fixed $x\in\Omega$, 
follows from the Dominated Convergence Theorem
that the Fr\'echet derivative of 
$
\Phi_{n}:C(\Omega)\times \Omega\rightarrow\mathbb{R},
$ 
evaluated  at $f$ and computed  in $\varphi$, is given by  
$$
\frac{\partial}{\partial f}\Phi_n(f,x)\cdot\varphi
=
\frac{1}{n}
\frac{\displaystyle\int_{M^n}(S_{n}\varphi)(\ba x)\exp(S_{n}f)(\ba x)\, d\mu(\ba)}
{\mathscr{L}_f^n{\bf 1}(x)},
$$
where $(S_{n}\varphi)(x)\equiv  \sum_{j=0}^{n-1}\varphi\circ\sigma^{j}(x).$
Clearly,  
$
\displaystyle\left\| n^{-1}S_n\varphi\right\|_{0}
\leq
\|\varphi\|_{0},
$
and therefore we have the following estimate
\begin{align}\label{pressao 3}
\left|\frac{\partial}{\partial f}\Phi_{n}(f,x)\cdot\varphi\right|
&=
\left|\frac{\displaystyle\int_{M^n}\frac{1}{n}(S_{n}\varphi)(\ba x)\exp(S_{n} f)(\ba x)\, d\mu(\ba)}
{\mathscr{L}_f^n{\bf 1}(x)}\right|
\nonumber
\\
&\leq
\frac{\displaystyle\int_{M^n}\left|\frac{1}{n}(S_{n}\varphi)(\ba x)\exp(S_{n}f)(\ba x)\, \right| d\mu(\ba)}
{\left|\displaystyle\int_{M^n}\exp(S_{n}f)(\ba x)\, d\mu(\ba)\right|}
\nonumber
\\
&\leq 
\|\varphi\|_0,
\end{align}
for any $f\in C(\Omega)$ independently of $x\in \Omega$.
Taking the supremum in \eqref{pressao 3} over $x\in\Omega$ we obtain   
\[
\left\|\frac{\partial}{\partial f}\Phi_{n}(f,\cdot)\cdot\varphi\right\|_0
\leq 
\|\varphi\|_0,
\]
for all  $n\in\mathbb{N}$. The above inequality 
allows to conclude  that 
$\|\frac{\partial}{\partial f}\Phi_{n}(f)\| \leq 1$, 
where $\|\cdot\|$ means the operator norm.

Fix $f$ and $\tilde{f}$ in $C(\Omega)$ 
and define for each  $n\in \mathbb{N}$ the map  
$\hat{\Phi}_{n}(t)=\Phi_{n}(\alpha(t),x),$ 
where
$\alpha(t)=t f+(1-t)\tilde{f}$
with   $0\leq t\leq 1$. Obviously, $\hat{\Phi}_{n}$  
is a differentiable map when  seen as 
a map from $[0,1]$ to  $\mathbb{R}$ and we have that 
$
|\hat{\Phi}_{n}(1)-\hat{\Phi}_{n}(0)|=|\frac{d}{dt}\hat{\Phi}_{n}(\hat{t})(1-0)|
$ 
for some 
$\hat{t}\in (0,1)$. 
Using the above estimative of the Fr\'echet derivative norm
we have that 
\begin{align}\label{estimativa-dif-Phin}
|\Phi_{n}(f,x)-\Phi_{n}(\tilde{f},x)|
=
\left|\frac{\partial}{\partial f}\Phi_{n}(f,x)(f-\tilde{f})\right|\leq\|f-\tilde{f}\|_{0}.
\end{align}
As an outcome for any fixed $f\in C(\Omega)$ the sequence  $(\Phi_{n}(f))_{n\in\mathbb{N}}$ 
is  uniformly equicontinuous.
Moreover, $\sup_{n\in\mathbb{N}}\|\Phi_{n}(f)\|_{0}<\infty$. 
Indeed, from inequality $\eqref{estimativa-dif-Phin}$, the triangular inequality 
and existence of the limit  $\lim_{n\to\infty} \Phi_n({\bf 1})$,  
it follows that
\begin{align*}
 |\Phi_n(f)|
 =
 |\Phi_n(f)-\Phi_n({\bf 1})|+|\Phi_n({\bf 1})|
 &\leq 
 \|f-{\bf 1} \|_{\infty} +|\Phi_n({\bf 1})|
 \\
 &\leq
 \|f-{\bf 1}\|_{\infty} +\sup_{n\in\mathbb{N}}|\Phi_n({\bf 1})|
 \\
 &\equiv 
 M(f).
 \end{align*} 
Now, we are able  to apply the  Arzel\`a-Ascoli's Theorem 
to obtain a  subsequence
$
(\Phi_{n_k}(f))_{k\in\mathbb{N}},
$ 
which  converges to a function $\Phi(f)\in C(\Omega)$.

We now show that   $\Phi_n(f)\to \Phi(f)$, when $n\to\infty$.
Let $\varepsilon>0$ 
and  $g\in C^{\alpha}(\Omega)$ such that $\|f-g\|_0<\varepsilon$.
Choose $n_k$ and $n$ sufficiently large so that  
the inequalities  
$\|\Phi_{n_k}(f)- \Phi(f)\|_{0}<\varepsilon$,
$\|\Phi_n(g)-\Phi_{n_k}(g)\|_{0}<\varepsilon$
and
$\|\Phi_{n_k}(g)-\Phi_{n_k}(f)\|_{0}<\varepsilon$
are satisfied.
For these choices of  $g,n$ and  $n_k$, we have by the triangular inequality
and inequality \eqref{estimativa-dif-Phin} that
\begin{align*}
\|\Phi_n(f)- \Phi(f)\|_{0}
&\leq 
\|\Phi_n(f)-\Phi_{n_k}(f)\|_{0}
+
\|\Phi_{n_k}(f)- \Phi(f)\|_{0}
\\
&<
\|\Phi_n(f)-\Phi_{n_k}(f)\|_{0}
+
\varepsilon
\\
&\leq 
\|\Phi_n(f)-\Phi_{n}(g)\|_{0}
+
\|\Phi_n(g)-\Phi_{n_k}(f)\|_{0}
+
\varepsilon
\\
&< 
\|\Phi_n(g)-\Phi_{n_k}(f)\|_{0}
+
2\varepsilon
\\
&\leq 
\|\Phi_n(g)-\Phi_{n_k}(g)\|_{0}
+
\|\Phi_{n_k}(g)-\Phi_{n_k}(f)\|_{0}
+
2\varepsilon
\\
&<
4\varepsilon,
\end{align*}
thus proving the desired convergence. 

To finish the proof it is enough to observe 
that the inequality \ref{estimativa-dif-Phin}
implies that $\Phi$ is a Lipschitz continuous function.
\end{proof}

\begin{theorem}\label{Pressure derivative}
For each fixed  $0<\alpha<1$ and $f\in C^{\alpha}(\Omega)$ 
the  Fr\'echet derivative of the pressure  functional 
$P:C^{\alpha}(\Omega)\rightarrow\mathbb{R}$
is given by
\begin{equation}\label{pressure-derivative}
P'(f)\varphi=\int\varphi h_{f}\, d\nu_{f} 
\end{equation}
\end{theorem}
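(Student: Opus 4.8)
The plan is to differentiate through the limit $P(f)=\lim_{n\to\infty}\Phi_n(f,\cdot)$ established in Theorem \ref{Pressao-assintotica}, reusing the explicit Fr\'echet derivatives of $\Phi_n$ computed there. Since each $\Phi_n(\cdot,x):C^{\alpha}(\Omega)\to\mathbb{R}$ is differentiable and $\Phi_n(f,\cdot)\to P(f)$ uniformly, it suffices to show that the derivatives $\frac{\partial}{\partial f}\Phi_n(f,x)\cdot\varphi$ converge, uniformly enough, to the linear functional $\varphi\mapsto\int\varphi h_f\,d\nu_f$; the standard theorem on interchanging a limit and a Fr\'echet derivative in Banach spaces then yields $P'(f)\varphi=\int\varphi h_f\,d\nu_f$.

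The first step is to rewrite the derivative from the proof of Theorem \ref{Pressao-assintotica} in a form amenable to the Ruelle--Perron--Frobenius asymptotics. Splitting the Birkhoff sum as $S_nf(\ba x)=S_jf(\ba x)+S_{n-j}f(\sigma^j(\ba x))$ and integrating out the first $j$ coordinates of $\ba$ via the semigroup property of $\mathscr{L}_f$, one obtains for each $0\le j\le n-1$ the identity $\int_{M^n}\varphi(\sigma^j(\ba x))\exp(S_nf)(\ba x)\,d\mu(\ba)=\mathscr{L}_f^{\,n-j}\big[\varphi\cdot\mathscr{L}_f^{\,j}{\bf 1}\big](x)$, whence
\[
\frac{\partial}{\partial f}\Phi_n(f,x)\cdot\varphi
=\frac1n\sum_{j=0}^{n-1}
\frac{\mathscr{L}_f^{\,n-j}\big[\varphi\cdot\mathscr{L}_f^{\,j}{\bf 1}\big](x)}{\mathscr{L}_f^{\,n}{\bf 1}(x)}.
\]
(As a check, $\varphi\equiv{\bf 1}$ makes every summand equal to $1$, consistent with $P'(f){\bf 1}=1$ since adding a constant to $f$ shifts the pressure by that constant.) Next I would pass to the limit: dividing numerator and denominator by $\lambda_f^{\,n}$ and applying part (iii) of Theorem \ref{teo-RPF-compacto}, we have $\lambda_f^{-k}\mathscr{L}_f^{\,k}{\bf 1}\to h_f$ uniformly, so for indices with both $j$ and $n-j$ large the general summand tends to $\big(h_f(x)\int\varphi h_f\,d\nu_f\big)/h_f(x)=\int\varphi h_f\,d\nu_f$, using that $h_f$ is bounded away from $0$ and that the operators $\lambda_f^{-k}\mathscr{L}_f^{\,k}$ are uniformly bounded on $C(\Omega)$. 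Since the summands are uniformly bounded by $\|\varphi\|_0$ (the estimate already used in Theorem \ref{Pressao-assintotica}), the Ces\`aro average of a bounded sequence whose bulk terms converge has the same limit, giving $\frac{\partial}{\partial f}\Phi_n(f,x)\cdot\varphi\to\int\varphi h_f\,d\nu_f$, uniformly in $x$.

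The main obstacle is the last step: upgrading this convergence to the uniformity in the base point $f$ needed to differentiate the limit. The rate in $\lambda_f^{-k}\mathscr{L}_f^{\,k}{\bf 1}\to h_f$ is governed by the spectral gap of part (ii) of Theorem \ref{teo-RPF-compacto}, whose size, together with $\lambda_f$, $h_f$ and $\nu_f$, varies continuously (indeed analytically) with $f\in C^{\alpha}(\Omega)$; hence the estimates are uniform over a small $C^{\alpha}$-neighborhood of $f$, which delivers uniform convergence of $\frac{\partial}{\partial f}\Phi_n$ and thus the Fr\'echet differentiability of $P$ with the claimed derivative. Alternatively, since $P$ is already known to be real analytic by Theorem \ref{Pressure Analiticity}, it is enough to identify the directional derivative $\frac{d}{dt}P(f+t\varphi)\big|_{t=0}$, which confines the required uniformity to the compact segment $\{f+t\varphi:t\in[0,1]\}$ on which the spectral gap is uniformly bounded below. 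As a consistency cross-check, differentiating the eigenfunction equation $\mathscr{L}_{f+t\varphi}h_{f+t\varphi}=\lambda_{f+t\varphi}h_{f+t\varphi}$ at $t=0$ and integrating against $\nu_f$, using $\mathscr{L}_f^{*}\nu_f=\lambda_f\nu_f$ and the normalization $\int h_f\,d\nu_f=1$, cancels the $\dot h$ terms and gives $\dot\lambda/\lambda_f=\int\varphi h_f\,d\nu_f$, i.e. the same value for $P'(f)\varphi=\frac{d}{dt}\log\lambda_{f+t\varphi}\big|_{t=0}$.
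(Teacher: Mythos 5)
Your proposal is correct and takes essentially the same route as the paper: your identity $\int_{M^n}\varphi(\sigma^j(\ba x))\exp(S_nf)(\ba x)\,d\mu(\ba)=\mathscr{L}_f^{\,n-j}\big(\varphi\,\mathscr{L}_f^{\,j}{\bf 1}\big)(x)$ is exactly Claim~1 of the paper's key Lemma (the decomposition $\mathscr{L}_f^{n}(S_n\varphi)=\sum_{j=0}^{n-1}\mathscr{L}_f^{\,n-j}(\varphi\,\mathscr{L}_f^{\,j}{\bf 1})$), and your passage to the limit via RPF item (iii), the lower bound on $h_f$, the uniform bound $\|\varphi\|_0$ on the summands, and the Ces\`aro argument reproduces items (a)--(b) and Claims~2--3 of that Lemma. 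Your ``alternative'' concluding step --- invoking the known analyticity of $P$ (Theorem~\ref{Pressure Analiticity}) so that it suffices to identify $\lim_{n\to\infty}\Phi_n'(f)\varphi$ as the derivative --- is precisely how the paper itself finishes the proof.
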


\begin{lemma}
For each  fixed $f \in  C^{\alpha}(\Omega),$ $0<\alpha< 1,$ there exists 
the limit
$$
\lim_{n\to \infty}
\left\|
\frac{1}{n}
\frac{\mathscr{L}_{f}^{n}(S_{n}\varphi)}
{\mathscr{L}^{n}_{f}{\bf 1}}
-
\int\varphi h_{f}\, d\nu_{f}
\right\|_0
=
0
$$
for every  $\varphi\in C(\Omega)$, and  the convergence is uniform.
\end{lemma}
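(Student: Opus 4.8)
The plan is to expand the Birkhoff sum $S_{n}\varphi=\sum_{j=0}^{n-1}\varphi\circ\sigma^{j}$ and to exploit the fact that a function depending only on the tail factors out of the Ruelle operator. By linearity it suffices to control each term $\mathscr{L}_{f}^{n}(\varphi\circ\sigma^{j})$. Splitting $\mathscr{L}_{f}^{n}=\mathscr{L}_{f}^{n-j}\circ\mathscr{L}_{f}^{j}$ and noting, exactly as in Lemma~\ref{lema-phi1-fora}, that $\sigma^{j}(\ba x)=x$ for $\ba\in M^{j}$, so that $\mathscr{L}_{f}^{j}(\varphi\circ\sigma^{j})=\varphi\cdot\mathscr{L}_{f}^{j}{\bf 1}$, I obtain the key identity $\mathscr{L}_{f}^{n}(\varphi\circ\sigma^{j})=\mathscr{L}_{f}^{n-j}(\varphi\cdot\mathscr{L}_{f}^{j}{\bf 1})$. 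Writing $H_{k}:=\lambda_{f}^{-k}\mathscr{L}_{f}^{k}{\bf 1}$ and $g_{j}:=\varphi\,H_{j}$, a direct normalization turns the quantity of interest into
\[
\frac{1}{n}\frac{\mathscr{L}_{f}^{n}(S_{n}\varphi)}{\mathscr{L}_{f}^{n}{\bf 1}}
=\frac{1}{n}\sum_{j=0}^{n-1} a_{n,j},
\qquad
a_{n,j}:=\frac{\lambda_{f}^{-(n-j)}\mathscr{L}_{f}^{n-j}(g_{j})}{H_{n}}.
\]
By Theorem~\ref{teo-RPF-compacto}(iii) applied to $\varphi={\bf 1}$ one has $H_{k}\to h_{f}$ uniformly, and hence $g_{j}\to\varphi h_{f}$ uniformly; also $C_{0}:=\sup_{k}\|H_{k}\|_{0}<\infty$.

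I would then show that each $a_{n,j}$ is uniformly bounded by $\|\varphi\|_{0}$ and is close to $I:=\int\varphi h_{f}\,d\nu_{f}$ whenever both $j$ and $n-j$ are large. Positivity of $\mathscr{L}_{f}^{n-j}$ together with $|g_{j}|\le\|\varphi\|_{0}H_{j}$ gives $|\mathscr{L}_{f}^{n-j}(g_{j})|\le\|\varphi\|_{0}\mathscr{L}_{f}^{n-j}(H_{j})=\|\varphi\|_{0}\lambda_{f}^{-j}\mathscr{L}_{f}^{n}{\bf 1}$, whence $|a_{n,j}|\le\|\varphi\|_{0}$. For the convergence I would decompose
\[
\lambda_{f}^{-(n-j)}\mathscr{L}_{f}^{n-j}(g_{j})-h_{f}I
=\lambda_{f}^{-(n-j)}\mathscr{L}_{f}^{n-j}(g_{j}-\varphi h_{f})
+\big(\lambda_{f}^{-(n-j)}\mathscr{L}_{f}^{n-j}(\varphi h_{f})-h_{f}I\big).
\]
The first term is bounded by $C_{0}\|g_{j}-\varphi h_{f}\|_{0}\le C_{0}\|\varphi\|_{0}\|H_{j}-h_{f}\|_{0}$ (again by positivity, via $\|\lambda_{f}^{-k}\mathscr{L}_{f}^{k}w\|_{0}\le C_{0}\|w\|_{0}$), which is small once $j\ge m$. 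The second term is precisely the content of Theorem~\ref{teo-RPF-compacto}(iii) for the \emph{fixed} continuous function $\varphi h_{f}$, hence small once $n-j\ge m$. Dividing by $H_{n}$, which is bounded below for large $n$ since $h_{f}$ is continuous and strictly positive on the compact space $\Omega$, and using $\int h_{f}\,d\nu_{f}=1$ from Proposition~\ref{Principio variacional} (so $|I|\le\|\varphi\|_{0}$), I conclude that $\|a_{n,j}-I\|_{0}\le\rho_{m}+o(1)$ uniformly over the bulk range $m\le j\le n-m$, where $\rho_{m}\to 0$ as $m\to\infty$.

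Finally I would run a Ces\`aro argument. Given $\varepsilon>0$, fix $m$ so large that $\rho_{m}$ is small, and split $\tfrac{1}{n}\sum_{j=0}^{n-1}(a_{n,j}-I)$ into the at most $2m$ edge terms ($j<m$ or $j>n-m$) and the bulk. Each edge term satisfies $\|a_{n,j}-I\|_{0}\le 2\|\varphi\|_{0}$, so the edge average is at most $4m\|\varphi\|_{0}/n\to 0$; each bulk term is within $\rho_{m}+o(1)$ of $0$ in $\|\cdot\|_{0}$. Letting $n\to\infty$ and then $m\to\infty$ yields the asserted uniform convergence. The main obstacle is the double dependence of $a_{n,j}$ on both $j$ and $n-j$: the limit in Theorem~\ref{teo-RPF-compacto}(iii) is available only for a fixed function, so it cannot be applied directly to the moving target $g_{j}$. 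The decomposition above resolves this by isolating the genuinely fixed function $\varphi h_{f}$, to which (iii) applies, and absorbing the residual $j$-dependence into a term controlled uniformly by positivity of the normalized operator.
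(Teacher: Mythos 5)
Your proof is correct and takes essentially the same route as the paper's: the same key identity $\mathscr{L}_{f}^{n}(S_{n}\varphi)=\sum_{j=0}^{n-1}\mathscr{L}_{f}^{n-j}(\varphi\,\mathscr{L}_{f}^{j}{\bf 1})$ (the paper's Claim 1), the same replacement of $\lambda_{f}^{-j}\mathscr{L}_{f}^{j}{\bf 1}$ by $h_{f}$ controlled through positivity of the normalized operator (Claim 2), the same Ces\`aro argument applied to the fixed function $\varphi h_{f}$ via Theorem \ref{teo-RPF-compacto}(iii) (Claim 3), and the same handling of the division by $\mathscr{L}_{f}^{n}{\bf 1}$ using that $h_{f}$ is bounded away from zero. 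Your edge/bulk split merely repackages the paper's two separate Ces\`aro estimates into one, so no further comment is needed.
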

\begin{proof} 
A straightforward calculation shows that   
\begin{align}
\label{desigualdade}
\left\|
\frac{1}{n}\frac{\mathscr{L}_{f}^{n}(S_{n}\varphi)}{\mathscr{L}^{n}_{f}{\bf 1}}
-
\int\varphi h_{f}\, d\nu_{f}
\right\|_{0}
&=
\left\|
\frac{\lambda_f^{n}}{\mathscr{L}^{n}_{f}{\bf 1}}\frac{1}{n}
\lambda_f^{-n}\mathscr{L}_{f}^{n}(S_{n}\varphi)
-
\int\varphi h_{f}\, d\nu_{f}
\right\|_{0}
\\[0.4cm] 
&
\hspace*{-3,175cm}
\leq
\sup_{n\in\mathbb{N}}
\left\|\frac{\lambda_f^{n}}{\mathscr{L}^{n}_{f}{\bf 1}}\right\|_{0}
\left\|\frac{1}{n}\lambda_f^{-n}\mathscr{L}_{f}^{n}(S_{n}\varphi)-
(\lambda_f^{-n}\mathscr{L}^{n}_{f}{\bf 1})\int\varphi h_{f}\, d\nu_{f}\right\|_{0}.
\nonumber
\end{align}
Therefore to get the desired result it is 
sufficient to show that:
\begin{itemize}
\item[(a)] 
$
\displaystyle\sup_{n}\left\|\frac{\lambda_f^{n}}{\mathscr{L}^{n}_{f}{\bf 1}}\right\|_{0} 
$ is finite;

\item[(b)]
$
\displaystyle \quad (\lambda_f^{-n}\mathscr{L}^{n}_{f}{\bf 1})\int\varphi h_{f}\, d\nu_{f}
$
converges to  
$
\displaystyle h_f\int\varphi h_{f}\, d\nu_{f};
$

\item[(c)] 
$
\displaystyle \quad\frac{1}{n}\lambda_f^{-n}\mathscr{L}_{f}^{n}(S_{n}\varphi) $
converges to
$
\displaystyle h_f\int\varphi h_{f}\, d\nu_{f}.
$
\end{itemize}

The first two items 
are immediate consequences of the Ruelle-Perron-Frobenius Theorem. Indeed,  the 
convergence 
$
\lambda_f^{-n}\mathscr{L}_f^n {\bf 1}\stackrel{\|\cdot\|_0}{\longrightarrow}h_f,
$
 immediately  give that
\begin{align*}
(\lambda_f^{-n}\mathscr{L}^{n}_{f}{\bf 1})
\int\varphi h_{f}\, d\nu_{f} 
\stackrel{\|\cdot\|_0}{\longrightarrow}  
h_f\int\varphi h_{f}\, d\nu_{f}.
\end{align*}
Since  $h_f$ is a continuous strictly positive function, 
it follows from compactness of $\Omega$ that  $h_f$ is bounded away  from  
zero, 
and consequently
 $
 \lambda_f^{n}/\mathscr{L}_f^n{\bf 1} \stackrel{\|\cdot\|_0}{\longrightarrow} 1/h_f.
 $
Once  $h_f$  is strictly positive $1/h_f$ is also positive and 
bounded away from zero, which gives  that  
\begin{align*}
\sup_{n}\left\|\frac{\lambda_f^{n}}{\mathscr{L}^{n}_{f}{\bf 1}}\right\|_{0}<\infty.
\end{align*}

The third expression  in  (c)  is  harder to analyze
than the previous  two, so  we will split  the analysis 
in three claims.

\medskip
\noindent \emph{Claim 1.} 
For all $\varphi\in C(\Omega)$ and   $n\in\mathbb{N}$ we have 
that,
\begin{equation}\label{Claim 1}
\lambda_f^{-n}\mathscr{L}_{f}^{n}(S_{n}\varphi)
=
\sum_{j=0}^{n-1}\lambda_f^{-(n-j)}\mathscr{L}_{f}^{n-j}
(\varphi\lambda_f^{-j}\mathscr{L}^{j}_{f}{\bf 1}).
\end{equation}
 
We first observe that,
$
\mathscr{L}_{f}^{n}(\varphi\circ\sigma^{n})
=
\varphi\mathscr{L}^{n}_{f}{\bf 1},
$ 
which  is an easy consequence of the definition of the  Ruelle operator.  
From that and  the linearity of the Ruelle operator it follows,
\begin{align*}
\mathscr{L}_{f}^{n}(S_{n}\varphi)
&=
\mathscr{L}_{f}^{n}(\varphi)+\mathscr{L}_{f}^{n-1}(\mathscr{L}_{f}(\varphi\circ\sigma))+
\ldots+\mathscr{L}_{f}(\mathscr{L}_{f}^{n-1}\varphi\circ\sigma^{n-1})
\\
&
\hspace*{-1cm}
=
\mathscr{L}_{f}^{n}(\varphi)+\mathscr{L}_{f}^{n-1}(\varphi\mathscr{L}_{f}{\bf 1})+
\ldots+\mathscr{L}_{f}(\varphi\mathscr{L}^{n-1}_{f}{\bf 1})
=
\sum_{j=0}^{n-1}\mathscr{L}_{f}^{n-j}(\varphi\mathscr{L}^{j}_{f}{\bf 1})
\end{align*}
finishing the proof.  
From  the linearity of the Ruelle operator 
we easily get  
\begin{align*}
\lambda_f^{-n}\mathscr{L}_{f}^{n}(S_{n}\varphi)
=
\sum_{j=0}^{n-1}\lambda_f^{-(n-j)}\mathscr{L}_{f}^{n-j}
(\varphi\lambda_f^{-j}\mathscr{L}^{j}_{f}{\bf 1}).
\end{align*}
 
\medskip
\noindent\emph{Claim  2.} For each  $\varphi\in C(\Omega)$ 
we have 
\begin{equation}\label{eq:5}
\lim_{n\to \infty}\left\|
\frac{1}{n}\left(
\lambda_f^{-n}\mathscr{L}_{f}^{n}(S_{n}\varphi)-\sum_{j=0}^{n-1}\lambda_f^{-(n-j)}\mathscr{L}_{f}^{n-j}\varphi h_{f}
\right)
\right\|_{0}
=
0.
\end{equation}
To verify \eqref{eq:5}  we use  \eqref{Claim 1}  to obtain the following estimate,
\begin{align*}
\left\|
	\frac{1}{n}
	\left\{
		\lambda_f^{-n}\mathscr{L}_f^{n}(S_{n}\varphi)
		-
		\sum_{j=0}^{n-1}\lambda_f^{-(n-j)}\mathscr{L}_f^{(n-j)}\varphi h_f
	\right\}
\right\|_{0}
\\[0.4cm]
&\hspace*{-4cm}=
\left\|\frac{1}{n}\sum_{j=0}^{n-1}\lambda_f^{-(n-j)}\mathscr{L}_f^{n-j}\left(\varphi\lambda_f^{-j}\mathscr{L}^{j}_f{\bf 1}-\varphi h_f\right)\right\|_{0}
\\[0.4cm]
&\hspace*{-4cm}\leq
\frac{const.}{n}\, 
\sum_{j=0}^{n-1}\left\|\varphi\lambda_f^{-j}\mathscr{L}^{j}_f{\bf 1}-\varphi h_f\right\|_{0}.
\end{align*}
The last term in the above inequality converges 
to zero, when $n\to \infty$, because it  is a Ces\`aro summation associated to 
the sequence  $\varphi\lambda_f^{-j}\mathscr{L}^{j}_f{\bf 1}-\varphi h_f$,
which converges to  zero in the uniform norm by    the 
Ruelle-Perron-Frobenius Theorem, so the claim is proved.

\noindent \emph{Claim 3.} For each  $\varphi\in C(\Omega)$ there exists  
the following limit
\begin{equation}\label{eq:7}
\lim_{n\to \infty}\left\|\frac{1}{n}\sum_{j=0}^{n-1}\lambda_{f}^{-(n-j)}\mathscr{L}_{f}^{n-j}\varphi h_{f}-
h_{f}\int\varphi h_{f}d\nu_{f}\right\|_{0}=0.
\end{equation}
Define
$A_{n,j}:=\lambda_f^{-(n-j)}\mathscr{L}_{f}^{n-j}\varphi h_{f}$ 
and 
$B:=h_{f}\int\varphi h_{f}\, d\nu_{f}.$ 
In one hand,
 by  the  Ruelle-Perron-Frobenius Theorem,
 we must have for any
fixed  
$j\in\mathbb{N}$ that \linebreak $\lim_{n\to \infty}\left\|A_{n,j}-B\right\|_{0}=0.$
On the other hand, we have by the triangular inequality 
and the convergence in the Ces\`aro sense that 
\begin{align*}
\left\|\dfrac{1}{n}\sum_{j=0}^{n-1}A_{n,j}-B\right\|_{0}
\leq
\dfrac{1}{n}\sum_{j=0}^{n-1}\left\|A_{n,j}-B\right\|_{0}\longrightarrow 0,
\end{align*}
when  $n\rightarrow\infty,$   finishing the proof of  Claim 3.
Therefore the proof of the Lemma is established.
\end{proof}

\noindent{\bf Proof of Theorem \ref{Pressure derivative}} 

 Fix $x\in \Omega $ and define  function
 $\Phi_{n}:C^{\alpha}(\Omega)\rightarrow\mathbb{R}$ 
 as 
\[
\Phi_{n}(f)\equiv \frac{1}{n}\log(\mathscr{L}_{f}^{n}{\bf 1})(x).
\]
As we have seen in the proof of Theorem \ref{Pressao-assintotica}  
the  Fr\'echet derivative
of  $\Phi_n$ at $f$ evaluated in  $\varphi\in C^{\alpha}(\Omega)$ is 
given  by  
\[
\Phi_{n}'(f)\varphi
=
\frac{1}{n}
\frac{\mathscr{L}_{f}^{n}(S_{n}\varphi)}{\mathscr{L}^{n}_{f}{\bf 1}},
\]
Since  we have  the analyticity of the pressure
functional in  $C^{\alpha}(\Omega)$ (Theorem \ref{Pressure Analiticity}) it follows from 
the previous  Lemma that 
\[
P'(f)\varphi
=
\lim_{n\to \infty} \Phi_{n}'(f)\varphi
=
\int\varphi h_{f}\, d\nu_{f}.
\]

\section{A Heisenberg type Model }
The aim of this section is to  introduce a Heisenberg type model on
the half-space $\mathbb{N}\times \mathbb{Z}$, 
prove absence of phase transition and exponential decay of correlations 
for this model.

The construction of this model is split in two steps. 
Firts step. We consider  $(S^2)^\mathbb{Z}$ as the configuration space. At 
inverse temperature $\beta\in (0, \infty),$ 
the configurations  are randomly chosen according 
to the following probabilities 
  measures $\mu_{n, \beta}$ 
\begin{equation}\label{Heisenberg-measure}
d\mu_{n, \beta}(\sigma):=\dfrac{1}{Z_{n, \beta}}
\exp (
\beta \sum_{i,j\in \Lambda_n} \sigma_i\cdot \sigma_j
)\, d\sigma,
\end{equation}
where $\Lambda_n$ denotes the symmetric interval of 
integers $[-n,n]$, $\sigma_i\cdot \sigma_j$ denotes the inner product 
 in $\mathbb{R}^3$ of  the first neighbors $\sigma_i$ and $\sigma_j$, 
\[
Z_{n, \beta}=\int_{(S^2)^{\Lambda_n}}\exp
(
\beta \sum_{i,j} {\sigma_i\cdot \sigma_j }
)\, d\sigma
\]
and $d\sigma$ is the uniform probability measure 
on $(S^{2})^{\Lambda_n}$.  Let measure $\hat{\nu}$ be 
the unique accumulation point of the sequence of probability measures
given by \eqref{Heisenberg-measure}, see \cite{Geogii88} for details.
The measure $\hat{\nu}$ will be used as the a priori measure 
in the second step.

Second step. Now we introduce a Heisenberg type model. 
We begin with the compact metric space $(S^2)^{\mathbb{Z}}$, where $S^2$ is the 
2-dimensional unit sphere in  $\mathbb{R}^3$, as our alphabet. 
Now the configuration space is the Cartesian product 
$\Omega =((S^2)^{\mathbb{Z}})^{\mathbb{N}},$ 
that is, a configuration is a point 
$\sigma=(\sigma(1),\sigma(2), \cdots)\in \Omega$,
where each ${\sigma}(i)$ is of the form 
$\sigma(i)=(\ldots,\sigma_{(i,-2)},\sigma_{(i,-1)},\sigma_{(i,0)}, \sigma_{(i,1)}, \ldots)$, 
and each $\sigma_{(i,j)}\in S^2$. We denote by $\|\cdot\|$ the $\mathbb{R}^3$
Euclidean norm, and $v\cdot w$ the inner product of two elements of $\mathbb{R}^3$.

Fix a summable ferromagnetic translation invariant 
 interaction  $J$ on $\mathbb{Z}$, that is,  
 a function $J:\mathbb{Z}\to (0,\infty)$ and assume that 
$J(n)= e^{-|n|\alpha}$, for some $\alpha>0$. Of course, we have  
$
\sum_{n\in \mathbb{Z}} J(n)<\infty.
$
Now we consider the potential $f:\Omega \to \mathbb{R}$ given by 
\begin{equation}\label{Heisenberg Potential}
f(\sigma)= \sum_{ n\in \mathbb{Z}} J(n)\ \sigma_{(1,n)}\cdot\sigma_{(2,n)}.
\end{equation}
Note that this potential has only first nearest neighbors interactions.

\begin{figure}[h]
\centering
\includegraphics[width=0.6\linewidth]{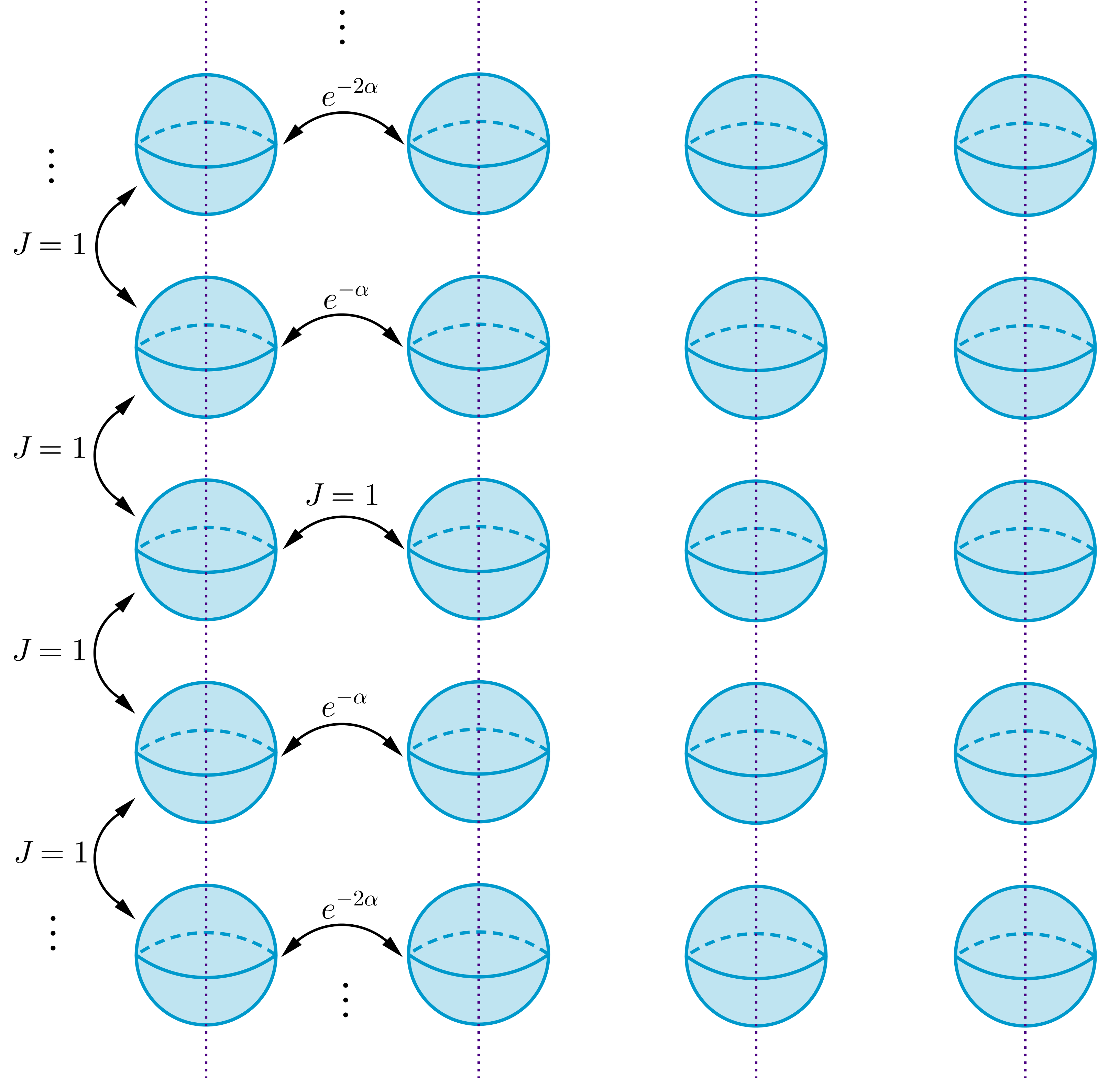}
\caption[Fig1]{The configuration space $((S^2)^{\mathbb{Z}})^{\mathbb{N}}$}
\label{fig:FigPaper}
\end{figure}

The  potential $f$ given by \eqref{Heisenberg Potential} is actually 
an $\alpha$-Hölder continuous function. 
Indeed, 
\begin{align}\label{Holder-1}
|f(\sigma)-f(\omega)|
&\leq 
\left |\sum_{n\in \mathbb{Z}}J(n)\sigma_{(1,n)}\cdot (\sigma_{(2,n)}-\omega_{(2,n)})\right|\nonumber
\\
&\hspace*{3.15cm} +
\left|\sum_{n\in \mathbb{Z}}J(n)\omega_{(2,n)}\cdot (\sigma_{(1,n)}-\omega_{(1,n)})\right|
\nonumber
\\[0.5cm]
&\leq
\sum_{n\in \mathbb{Z}}J(n)\|\sigma_{(2,n)}-\omega_{(2,n)}\|+ 
\sum_{n\in \mathbb{Z}}J(n)\|\sigma_{(1,n)}-\omega_{(1,n)}\| 
\end{align} 

From the very definition of the distance we have 

\begin{align}\label{Holder-2}
d(\sigma,\omega)
\geq 
\dfrac{1}{2^n}\sum_{j\in \mathbb{Z}}\dfrac{1}{2^{|j|}}\|\sigma_{(n,j)}-\omega_{(n,j)}\|
\geq 
\dfrac{1}{2^{n+|j|}}\|\sigma_{(n,j)}-\omega_{(n,j)}\|.
\end{align}
By using \eqref{Holder-1} and \eqref{Holder-2} we get that 
\[
\dfrac{|f(\sigma)-f(\omega)|}{d(\sigma,\omega)^{\alpha}}
\leq  
K_1\sum_{n\in \mathbb{N}}J(n)2^{\alpha n}
+
K_2\sum_{n\in \mathbb{N}}J(n)2^{\alpha n}.
\]
Since for all $n\in\mathbb{N}$ we have 
$
J(n)2^{\alpha n}
\leq 
\exp({-n(\alpha(1-\log 2))}),
$ 
and the constant $\alpha (1-\log 2)$ is  
positive, follows that the series $\sum_{n\in \mathbb{N}}J(n)2^{\alpha n}$ is  convergent.
Therefore, $f$ is an $\alpha$-H\"older continuous function.

From Theorem \ref{teo-RPF-compacto}, we have that there is a 
unique probability measure $\nu_{f}$ so that 
$\mathscr{L}_{f}^{*}\nu_f=\lambda_{f}\nu_f$. This probability 
measure, following \cite{CL16}, is a unique DLR-Gibbs measure associated 
to a quasilocal specification associated to $f$, see \cite{CL14} for the 
construction of this specification. 

By observing that the horizontal interactions, in our model, goes fast to zero in the $y$-direction 
is naturally to expect that the model is essentially a one-dimensional model.
This feature allow us to obtain the following result

\begin{corollary}
Let $\beta f$ be a potential, where $f$ is given by 
\eqref{Heisenberg Potential} and the inverse temperature $\beta \in (0,\infty)$.
Consider the a priori measure $\hat{\nu}$ on $(S^2)^{\mathbb{Z}}$, 
constructed from \eqref{Heisenberg-measure}.
Then for any fixed $\beta>0$ and $m\in\mathbb{Z}$ 
there are positive constants 
$K(\beta)$ and $c(\beta)$ such that for all $n\in\mathbb{N}$ we have
\[
\int_{\Omega} (\sigma_{(1,m)}\cdot\sigma_{(n+1,m)})\ d\nu_{\beta f}
\leq 
K(\beta)e^{-c(\beta)n}.
\]
Furthermore, the pressure functional is differentiable at $\beta f$ and 
its derivative is given by expression \eqref{pressure-derivative}.
\end{corollary}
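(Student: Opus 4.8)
The plan is to treat the two assertions of the corollary separately: the exponential decay of the two-point function is the substantive part, while the differentiability of the pressure is an immediate specialization of Theorem~\ref{Pressure derivative}. For the latter, recall that the text has already shown the potential $f$ of \eqref{Heisenberg Potential} to be $\alpha$-Hölder with respect to the a priori measure $\hat\nu$; since $\beta>0$, the scaled potential $\beta f$ is again $\alpha$-Hölder, so $\beta f\in C^{\alpha}(\Omega)$ and Theorem~\ref{Pressure derivative} applies verbatim, giving $P'(\beta f)\varphi=\int\varphi\,h_{\beta f}\,d\nu_{\beta f}$, which is exactly \eqref{pressure-derivative}. Thus the differentiability statement needs no further argument, and the remainder of the proof concerns the decay estimate.

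For the decay, the idea is to rewrite the two-point function so that the spectral gap of the Ruelle operator (Theorem~\ref{teo-RPF-compacto}) can be applied, precisely as in the proof of Proposition~\ref{Decay Correlation}. Writing a generic configuration as $\omega$ and letting $\sigma$ denote the left shift, I fix the standard basis $e_1,e_2,e_3$ of $\mathbb{R}^3$ and set $g_k\colon\Omega\to\mathbb{R}$, $g_k(\omega)=\omega_{(1,m)}\cdot e_k$, the $k$-th coordinate of the height-$m$ spin of the first column. Since the shift moves columns, $\omega_{(n+1,m)}\cdot e_k=g_k(\sigma^n\omega)$, whence $\omega_{(1,m)}\cdot\omega_{(n+1,m)}=\sum_{k=1}^{3} g_k\,(g_k\circ\sigma^n)$. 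Each $g_k$ depends on a single spin coordinate, so it is Lipschitz for $d_\Omega$, hence $g_k\in C^{\alpha}(\Omega)$ with $\|g_k\|_0\le 1$. Using the eigenmeasure relation $(\mathscr{L}^{*}_{\beta f})^{n}\nu_{\beta f}=\lambda_{\beta f}^{n}\nu_{\beta f}$ together with the multiplicative identity $\mathscr{L}^{n}_{\beta f}((g_k\circ\sigma^n)\,g_k)=g_k\,\mathscr{L}^{n}_{\beta f}(g_k)$, a direct consequence of the definition of the operator (cf.\ Lemma~\ref{lema-phi1-fora}), I obtain
\[
\int_{\Omega}(g_k\circ\sigma^n)\,g_k\,d\nu_{\beta f}
=
\int_{\Omega} g_k\,\big(\lambda_{\beta f}^{-n}\mathscr{L}^{n}_{\beta f}g_k\big)\,d\nu_{\beta f}.
\]

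The decisive point is that the disconnected part vanishes, that is $\int g_k\,d\nu_{\beta f}=0$, and this is where the symmetry of the model enters. The potential $f$ is invariant under the diagonal action of $SO(3)$ rotating all spins simultaneously, and the a priori measure $\hat\nu$, being an accumulation point of the rotation-invariant finite-volume measures \eqref{Heisenberg-measure}, is itself $SO(3)$-invariant. A short computation then shows that $\mathscr{L}_{\beta f}$ commutes with the induced action $(R\cdot\varphi)(\omega)=\varphi(R^{-1}\omega)$ on $C^{\alpha}(\Omega)$, using orthogonality of $R$ and invariance of $\hat\nu$. By simplicity of the maximal eigenvalue in Theorem~\ref{teo-RPF-compacto}, the eigenfunction $h_{\beta f}$ and the eigenmeasure $\nu_{\beta f}$ are therefore rotation-invariant; hence the vector $\int_{\Omega}\omega_{(1,m)}\,d\nu_{\beta f}\in\mathbb{R}^3$ is fixed by every rotation and so must vanish, giving $\int g_k\,d\nu_{\beta f}=0$ for each $k$. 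By Lemma~\ref{lema-proj-spectral-pif} this means $\pi_{\beta f}g_k=0$, so that $\lambda_{\beta f}^{-n}\mathscr{L}^{n}_{\beta f}g_k=\lambda_{\beta f}^{-n}\mathscr{L}^{n}_{\beta f}(g_k-\pi_{\beta f}g_k)$.

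To conclude, I invoke the spectral-gap estimate established inside the proof of Proposition~\ref{Decay Correlation}: for every $\widetilde\tau$ with $\tau<\widetilde\tau<1$ there is $C(\widetilde\tau)>0$ such that $\|\lambda_{\beta f}^{-n}\mathscr{L}^{n}_{\beta f}(g_k-\pi_{\beta f}g_k)\|_0\le C(\widetilde\tau)\,\widetilde\tau^{\,n}\|g_k\|_0$ for all $n$. Combining this with the displayed identity and $\|g_k\|_0\le 1$ gives
\[
\left|\int_{\Omega}(g_k\circ\sigma^n)\,g_k\,d\nu_{\beta f}\right|
\le
\|g_k\|_0\,\big\|\lambda_{\beta f}^{-n}\mathscr{L}^{n}_{\beta f}g_k\big\|_0
\le
C(\widetilde\tau)\,\widetilde\tau^{\,n},
\]
and summing over $k=1,2,3$ yields the desired bound $K(\beta)e^{-c(\beta)n}$ with $K(\beta)=3\,C(\widetilde\tau)$ and $c(\beta)=-\log\widetilde\tau>0$. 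I expect the main obstacle to be the symmetry step: one must verify that $\hat\nu$ is genuinely $SO(3)$-invariant and of full support (so that Theorem~\ref{teo-RPF-compacto} is applicable with $\mu=\hat\nu$), and that this invariance is transferred to $\nu_{\beta f}$ through the commutation of $\mathscr{L}_{\beta f}$ with the rotation action. Without the resulting vanishing of the mean magnetization, the computation would only control the connected correlation rather than the quantity appearing in the statement.
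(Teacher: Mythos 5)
Your proof is correct, and it rests on the same two pillars as the paper's own proof --- the spectral gap of $\mathscr{L}_{\beta f}$ and the vanishing of the one-point function by rotational symmetry --- but the route through them is genuinely different. The paper reduces everything to Proposition \ref{Decay Correlation}: it sets $\varphi^u(\sigma)=\sigma^u_{(1,m)}$ and $\xi^u=\varphi^u/h_{\beta f}$, so that $\int(\varphi^u\circ\sigma^n)\,\xi^u\,dm_{\beta f}=\int(\varphi^u\circ\sigma^n)\,\varphi^u\,d\nu_{\beta f}$, identifies this with the correlation function $C_{\varphi^u,\xi^u,m_{\beta f}}(n)$, and then must kill the disconnected term; for that it needs both $\int\xi^u\,dm_{\beta f}=0$ and $\int\varphi^u\,dm_{\beta f}=0$, the latter via the evenness $h_{\beta f}(\sigma)=h_{\beta f}(-\sigma)$ imported from \cite{CL17}, on top of an $O(3)$-invariance of $\nu_{\beta f}$ that the paper asserts without argument. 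You instead stay with $\nu_{\beta f}$ throughout: from $(\mathscr{L}^{*}_{\beta f})^{n}\nu_{\beta f}=\lambda_{\beta f}^{n}\nu_{\beta f}$ and the pull-out identity you reach $\int g_k\,\lambda_{\beta f}^{-n}\mathscr{L}^{n}_{\beta f}g_k\,d\nu_{\beta f}$, note that $\int g_k\,d\nu_{\beta f}=0$ forces $\pi_{\beta f}g_k=0$ by Lemma \ref{lema-proj-spectral-pif}, and apply the spectral-gap estimate directly. This buys two things: you never need the evenness of $h_{\beta f}$ (no appeal to \cite{CL17} at all, since the only vector fixed by all of $SO(3)$ is zero), and you actually supply the justification for rotation invariance of the eigendata (commutation of $\mathscr{L}_{\beta f}$ with the diagonal rotation action plus simplicity and uniqueness from Theorem \ref{teo-RPF-compacto}), a step the paper leaves implicit. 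What the paper's route buys is modularity --- the decay is quoted from Proposition \ref{Decay Correlation} as a black box, whereas you re-run its internals. Two caveats, both shared with the paper rather than introduced by you: the spectral-radius argument bounds the operator in the $\|\cdot\|_{\alpha}$ norm, so the estimate should strictly read $C(\widetilde\tau)\,\widetilde\tau^{\,n}\|g_k\|_{\alpha}$ rather than $\|g_k\|_{0}$ (harmless here, since each $g_k$ is Lipschitz with constant of order $2^{1+|m|}$ for fixed $m$, hence of bounded H\"older norm); and the full support of $\hat\nu$, needed to invoke Theorem \ref{teo-RPF-compacto} with $\mu=\hat\nu$, is asserted in the paper as well --- you are right to flag it as the point requiring verification. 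Your treatment of the differentiability statement coincides with the paper's.
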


\begin{proof}
Fix $m\in\mathbb{Z}$ and let 
$\sigma_{(n,m)}\equiv(\sigma_{(n,m)}^{x},\sigma_{(n,m)}^{y}, \sigma_{(n,m)}^{z})$.
Consider the following continuous potentials $\varphi^u, \xi^u$ given by
\[
\varphi^u(\sigma) = \sigma_{(1,m)}^{u}
\qquad\text{and}\qquad
\xi^u(\sigma) = \frac{\sigma_{(1,m)}^{u}}{h_{\beta f}(\sigma)}, \quad u=x,y,z.
\]
Note that 
\[
\int_{\Omega} \xi \, dm_{\beta f}
=
\int_{\Omega} \sigma_{(1,m)}^{u} \, d\nu_{\beta f}(\sigma) 
=
0,\quad u=x,y,z,
\]
where the last equality comes from the $O(3)$-invariance of 
the eingemeasure. 

Since $h_{\beta f}(\sigma)=h_{\beta f}(-\sigma)$, see \cite{CL17}, it follows again from the 
$O(3)$-invariance of $\nu_{f}$ that 
\[
\int_{\Omega} \sigma_{(1,m)}^{u} \, dm_{\beta f} 
= 
\int_{\Omega} \sigma_{(1,m)}^{u} h_f(\sigma) \, d\nu_{\beta f} 
=
0, \quad u=x,y,z.
\]
Therefore 
\begin{align}\label{Decay-Two-Point}
\int_{\Omega} (\sigma_{(1,m)}^{u}\sigma_{(n+1,m)}^u)\ d\nu_{\beta f}
&=
\int (\varphi^u\circ \sigma^n)\xi^u \, dm_{\beta f}
\nonumber
\\
&=
C_{\varphi^{u},\xi^{u},m_{\beta f}}(n)
=
O(e^{-c(\beta)n}),  
\end{align}
$u=x,y,z.$
By summing \eqref{Decay-Two-Point} with $u=x,y,z$ we get the claimed exponential decay.

The last statement follows from the fact that $(S^2)^{\mathbb{Z}}$ is a compact metric space, 
$\hat{\nu}$ is a full support a priori measure in
$(S^2)^{\mathbb{Z}}$ and $\beta f$ is an $\alpha$-Hölder.
Therefore the Proposition \ref{Decay Correlation} applies and 
corollary follows.
\end{proof}

\section*{Acknowledgments}
The author would like to thank  Leandro Cioletti, Artur Lopes
and Andr\'eia Avelar for fruitfull discussions and comments.

%
%
%
%
%
%

\end{document}